\newtheorem{thm}{Theorem}[section]
\newtheorem{prop}[thm]{Proposition}
\theoremstyle{definition}
\newtheorem{definition}[thm]{Definition}
\newcommand{\e}{\epsilon}
\renewcommand{\L}{\mathfrak{L}}
\theoremstyle{remark}
\newtheorem{remark}[thm]{Remark}
\numberwithin{equation}{section}
\begin{document}
\begin{abstract}
We study combinatorial Laplacians on rectangular subgraphs of $ \epsilon \mathbb{Z}^2 $ that approximate Laplace-Beltrami operators of Riemannian metrics as $ \epsilon \rightarrow 0 $. These laplacians arise as follows: we define the notion of a Riemmanian metric structure on a graph. We then define combinatorial free field theories and describe how these can be regarded as finite dimensional approximations of scalar field theory. We focus on the Gaussian field theory on rectangular subgraphs of $ \mathbb{Z}^2 $ and study its partition function by computing the asymptotic determinant of the discrete laplacian.
\end{abstract}
\title{Asymptotic Determinant of Discrete Laplace-Beltrami Operators}
\author{Ananth Sridhar}
\address{Department of Physics, University of California, Berkeley, CA
94720, USA}
\maketitle
\section{Introduction}
On a graph $ X = (V,E) $, the graph laplacian is an operator that acts on the space of functions $ f: V \rightarrow \mathbb{R} $ as:
\begin{align} \label{eq:eq1}
\Delta f (u) = \sum_{ [u,v] \in E} f(v) - f(u) 
\end{align}
The determinants of graph laplacians have been well studied in many different contexts. For instance, in spectral graph theory, Kirchoff's theorem relates the determinant of the graph laplacian on $ X $ to the number of rooted spanning trees on $ X $. Similarly, by Temperley's trick and its generalizations, the determinant of the graph laplacian on certain graphs $ X $ computes the number of perfect matchings on a related graph $ X' $  \cite{KPW}. In each case, the determinant of the laplacian is the partition function of an associated model in statistical mechanics. 

When $ X $ is the graph of a regular lattice, the asymptotics of the determinant have been investigated in many papers. In particular, when $ X $ is a rectangular subgraph of $ \mathbb{Z}^2 $ with $ N \times M $ vertices, the determinant of the laplacian (with Dirichlet boundary conditions) as $ N,M \rightarrow \infty $ is computed by \cite{duplant} in the context of Hamiltonian walks on the grid. The asymptotic expansion is:
\begin{align} \label{eq:eq2}
\log \det \Delta_{NM} = N M \frac{ 4 G }{\pi} + 2 (N+M) \log(\sqrt{2}-1) + \frac{1}{4} \log(NM) + O(1)
\end{align}
where $ G $ is the Catalan constant. More generally, the asymptotics of the determinants of laplacians on rectilinear regions of $ \mathbb{Z}^2 $ have been computed in \cite{Kenyon2000}.  When appropriately rescaled, this graph laplacian can be understood as a finite difference approximation of the laplacian $ \frac{\partial^2}{\partial x^2}+ \frac{\partial^2}{\partial y^2} $ of Eulidean space. 

In this paper, we study a generalized graph laplacians in $ \mathbb{Z}^2 $ that can be regarded as finite difference approximations of Laplace-Beltrami operators induced by a Riemannian metric $ g $. These laplacians arise as follows: on any graph $ X $, the cochain complex $ ( C^*(X), d) $ can be regarded as a combinatorial version of the de Rham complex. We define a Riemannian structure on $ X $  as a set of inner products $ \langle , \rangle_{C^*(X)} $ on the cochain spaces. On such a graph with Riemannian structure, we define a combinatorial Guassian field theories, as outlined in section \S \ref{subsection:fieldt}. The partition function of the boson field theory is $Z = \det \Delta $, where the combinatorial laplacian $ \Delta $ is by $ \langle df, df \rangle_{C^*(X)} = \langle \Delta f ,f \rangle_{C^*(X)} $ for all $ f \in C^0(X) $. 

Assume now that the underlying space $ M \subset \mathbb{R}^2 $ of $ X $ is endowed with a metric $ g = g_{xx} \; dx^2 + g_{yy}\; dy^2 $. By square subdivision of $ X $, we have a sequence $ \{ X^i \} $ of graphs with mesh $ \epsilon_i \rightarrow 0 $. Assume that the $ \{ X^i \} $ have Riemmanian structure such that the combinatorial laplacian $ \Delta^i $ on $ X^i $ is a good approximation of the laplacian induced by $ g$ on $ M$ (see section \ref{sec:lapcons}). This can be regarded as discretization of a massless scalar field theory on a nontrivial background metric. We show that the determinant of $ \Delta^i $ has the asymptotic expansion (Theorem \ref{thm:maintheorem}):
\begin{align} \label{eq:eq3}
\log \det \Delta^i = \frac{1}{\epsilon_i^2 } \int_M F(g) \; dx \; dy + \frac{1}{\epsilon_i}\int_{\partial M} B(g) \cdot ds + \frac{1}{2} \log(\epsilon^i)  + O(1)
\end{align}
where the functions $ F$ and $ B $ are given in equation (\ref{eq:densities}).

The discrete Guassian field theories are known to be conformally invariant in the scaling limit $ i \rightarrow \infty $, and consequently the asymptotic expansion of the form (\ref{eq:eq3}) is generally anticipated by the theory of finite size scaling \cite{cardy}. In particular, the coefficient of the logarithmic term is universally related to the universal charge of the limiting conformal field theory and the Euler characteristic of the $M$.

Some work has been done on relating the constant terms in the asymptotics of the determinant of discretized or finite difference laplacians with regularized functional determinants of analytic laplacians. In one-dimension, there are many positive results, for example \cite{Forman}, \cite{bfk}. In two dimensions, for some particular cases such as flat rectangles and flat torii, with particular choice of discretized laplacians, the constant term is known to be equal to the zeta-regularized determinant of the analytic laplacian, \cite{duplant},\cite{cjk}. We do not attempt to calculate the constant term in the asymptotic determinant (\ref{eq:eq3}) and its investigation for further work.

The structure of the paper is as follows. In \S\ref{sec:sec2}, we describe the notion of a Riemannian structure on a  graph, and define the combinatorial codifferential and combinatorial laplacian. We explain how a Riemannian manifold can be approximated by a sequence of graphs with Riemannian structure. Then we outline the definition of a combinatorial Gaussian free field theory on a graph and describe the relation to the dimer model of statistical physics. 
In \S \ref{sec:sec3}, we focus on the bosonic field theory on $ (M,g) $, when $ M $ is a rectangular region in $ \mathbb{R}^2 $ endowed with a diagonal metric $ g $, and set up the calculation of the determinant. Sections \S  \ref{sec:sec4} and \S  \ref{sec:sec5} are more technical sections: \S  \ref{sec:sec4} is devoted to estimating the inverse $ (\Delta^i)^{-1} $  as $ i \rightarrow \infty $; in \S  \ref{sec:sec5} we prove the main theorem, equation (\ref{eq:eq3}). Finally in \S  \ref{sec:sec6}, we conclude with some general remarks and further research directions.

\subsection{Acknowledgements} I would like to thank my adviser Nicolai Reshetikhin for suggesting to study asymptotics of determinants of laplacians and for his reading of several drafts. I also thank Niccolo S. Poulsen for many valuable discussions regarding combinatorial Laplacians and dimer models, and for proofreading some versions of this text.

\section{Preliminaries} \label{sec:sec2}
Let $ X $ be a cell decomposition of a simply connected region $ M \subset \mathbb{R}^2 $, with $ q$-cells $ X_q $ for $ q = 0,1,2 $. The space of $q$-chains $ C_q(X) $ for $ q = 0,1,2 $ is the real vector spaces with canonical basis $ X_q $. The dual space space $ C^q(X) = C_q(X)^* $ is the space of functions $ X_q \rightarrow \mathbb{R} $. There is an isomorphism $ C_q(X) \simeq C^q(X) $ induced by the canonical basis on $ C_q(X) $. For any subcomplex $ L \subset X $, there is a natural projection $ p: C^q(X) \rightarrow C^q(L) $ given by restriction to $ L $. Define $ C^q(X, L ) =  \text{ker}(p) \subset C^q(X) $. The boundary of $ X $  is the 1-dimensional subcomplex $ \partial X $ contained in $ \partial M $.

On an oriented complex, there is a natural boundary operator $\partial_q: C_q(X) \rightarrow C_{q-1}(X) $. The coboundary $ d_q: C^q(X) \rightarrow C^{q+1}(X) $ is the adjoint of $ \partial_q $ with respect to the pairing $ C^q(X) \times C_q(X) \rightarrow \mathbb{R} $. The coboundary operator satisfies $ d_{q+1} d_q = 0 $ and the complex $ (C^*(X), d) $ can be regarded as a discrete analog of the de Rham complex of a smooth manifold.

In the remainder, we assume for simplicity that $ X $ is embedded such that image of each $2$-cell is a simple convex polygon. In this case the $1$-skeleton of $ X $ then defines a simple planar graph embedded in $ M $.

\subsubsection{The dual and double of a graph}
We define the dual complex $\widetilde{X} $ as the complex found by truncating the standard dual, so that $\widetilde{X} $ is also a cell complex of $ M $ (see figure \ref{fig:fig1}). Note that as usual, there is a bijection from $ X_q $ and to $ ( \widetilde{X} \backslash \partial \widetilde{X} )_{2-q}$, but in addition the truncation gives a bijection of $ \partial X_q $ and $ \partial \widetilde X_{1-q} $. 

The double $ D(X) $ of $ X $ is, roughly speaking, the union of the $ X $ and $ * X $  (see figure \ref{fig:fig1}). Precisely, $ D(X) $ is the complex consisting of: a $0$-cell $ p_\sigma $ at the center of each $ \sigma \in X $; a $1$-cell $ [p_\sigma, p_\tau] $ for each pair $ \sigma \in X_i, \; \tau \in X_{i+1} $ with $ \sigma $ adjacent to $ \tau $; and a $2$-cell $ [ p_\alpha, p_{\beta_1} , p_\gamma, p_{\beta_2} ] $, for each set of $ \alpha \in X_0 $, $ \beta_1, \beta_2 \in X_1 $, $\gamma \in X_2 $ pairwise adjacent. Note that the graph of $ D(X) $  has a natural bipartite structure given by painting $ p_\sigma $ white if $ \sigma $ is a $1$-cell, and black otherwise. 
\begin{figure}[h]
\label{fig:fig1}
\begin{tikzpicture}

\coordinate (p0) at (-1.2,0);
\coordinate (p1) at (.1,-.2);

\coordinate (p3) at (-.2,1.2);
\coordinate (p4) at (-1,1);

\coordinate (p2) at (1,.5);

\coordinate (c0) at ($ .5*(p0)+.5*(p4) $);
\coordinate (c1) at ($ .5*(p0)+.5*(p1) $);
\coordinate (c2) at ($ .5*(p4)+.5*(p3) $);
\coordinate (c3) at ($ .5*(p1)+.5*(p3) $);
\coordinate (c4) at ($ .5*(p2)+.5*(p3) $);
\coordinate (c5) at ($ .5*(p1)+.5*(p2) $);

\coordinate (d0) at ($ .25*(p0)+.25*(p4)+.25*(p1)+.25*(p3) $);
\coordinate (d1) at ($ .3333*(p1)+.3333*(p3) + .3333*(p2) $);

\draw[color = black]
(p0)-- (p1)
(p1)--(p2)
(p1)--(p3)
(p2)--(p3)
(p0)--(p4)
(p4)--(p3);

\foreach \i in {0,...,4}
{
\fill (p\i) circle (2pt);
}

\end{tikzpicture} \; \; \;
\begin{tikzpicture}

\draw[color = lightgray]
(p0)-- (p1)
(p1)--(p2)
(p1)--(p3)
(p2)--(p3)
(p0)--(p4)
(p4)--(p3);

\foreach \i in {0,...,4}
{
\fill [color= lightgray] (p\i) circle (2pt);
}

\draw[color = black]
(p0)--(p1)
(p1)--(p2)
(p2)--(p3)
(p0)--(p4)
(p4)--(p3);

\fill (d0) circle (2pt);
\fill (d1) circle (2pt);
\fill (c0) circle (2pt);
\fill (c1) circle (2pt);
\fill (c2) circle (2pt);
\fill (c4) circle (2pt);
\fill (c5) circle (2pt);

\draw[color=black]
(d0)--(d1)
(d0)--(c0)
(d0)--(c1)
(d0)--(c2)
(d1)--(c4)
(d1)--(c5);
\end{tikzpicture} \; \; \;
\begin{tikzpicture}

\draw[color = black]
(p0)-- (p1)
(p1)--(p2)
(p1)--(p3)
(p2)--(p3)
(p0)--(p4)
(p4)--(p3);

\draw[color = black]
(d0)--(c0)
(d0)--(c1)
(d0)--(c2)
(d0)--(c3)
(d1)--(c3)
(d1)--(c4)
(d1)--(c5);

\foreach \i in {0,...,4}
{
\fill (p\i) circle (2pt);
}

\foreach \i in {0,...,5}
{
\fill[color=white] (c\i) circle (2pt);
}
\foreach \i in {0,...,5}
{
\draw[color=black] (c\i) circle (2pt);
}

\fill (d0) circle (2pt);
\fill (d1) circle (2pt);

\end{tikzpicture}
\caption{A graph, its dual, and its double.}
\end{figure}
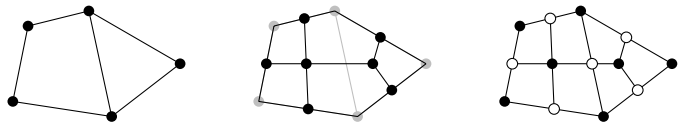
\subsection{Riemannian Structures on Graphs} Recall that the metric on a compact, oriented Riemannian manifold $ (M,g) $ induces a scalar product on the de Rham complex $ \Omega^*(M) $: for $ f,g \in \Omega^q(M) $
\begin{align*}
\langle f, g \rangle_{\Omega^q(M)} = \int_M f \wedge * g  
\end{align*}
 where $ *: \wedge^q T_p M \rightarrow \wedge^{2-q} T_p M $ is the Hodge dual. Continuing to view the cochain spaces $ C^q(X) $ as the combinatorial analog of $ \Omega^q(M) $, we define:
\begin{definition}
A Riemannian structure on $X$ is a set of scalar products $ \langle , \rangle_{C^q(X)} $  for $ q = 0,1,2 $.
\end{definition}
The scalar products can be required to satisfy certain additional axioms. In particular, recall that the scalar product on $ \Omega^q(M) $ is local in the sense that it is the integral of a fiberwise scalar product. We consider more generally the combinatorial notion of a fiberwise map $ \Omega^q(M) \rightarrow \Omega^r(M) $ as follows:
\begin{definition} \label{def:loc2} 
A map $ T: C^q(X) \rightarrow C^r(X) $ is said to $ m $-local if the matrix elements $ T_{\sigma, \tau} $ in the canonical basis of $ C^*(X) $ vanishes whenever the graph theoretic distance between $ p_\sigma, p_\tau \in D(X) $ is greater than $ 2 m $.
\end{definition}

\begin{definition}
A Riemannian structue $ \langle , \rangle_{C^q(X)}$ is local if the corresponding dual operator $ C^q(X) \rightarrow C^q(X) $  is $m$-local for some $ m $. A Riemannian structure is said to be diagonal if it is $0$-local.
\end{definition}
A diagonal Riemannian structure can be understood as an assignment of a volume weight $ | \sigma | = \langle \sigma, \sigma \rangle_{C^q(X)} $ to each cell $ \sigma \in X_q $. 

\subsubsection{Approximation theory} \label{sec:approx}
We now consider how a Riemannian manifold can be approximated by a graph with Riemannian structure. We assume as before that $ X $ is a cell decomposition of a region $ M \subset \mathbb{R}^2 $, but in addition assume that $ M $  is endowed with a smooth Riemannian metric $ g $. Define the mesh of $ X $  is $ \epsilon_X = \sup \{\; \text{diam}(\tau) \; | \; \tau \in X \} $. The embedding of $ X $ in $ M $ defines a surjection, namely the de Rham integration map $ I_{X}: \Omega^q(M) \rightarrow C^q(X) $:
\begin{align*}
I_X \omega = \sum_{\alpha \in X_q} \alpha \int_\alpha \omega 
\end{align*}
for $ \omega \in \Omega^q(M) $. 

Let $ \{X^i\}_{i \in \mathbb{N}} $ be a sequence of complexes in $ M $, such that $ \epsilon_{X^i} \rightarrow 0 $, for example generated by iterated subdivision of an initial complex. A sequence of Riemannian structures on the $ \{X^i\} $ is said to be consistent if the integration maps $ I_{X^i} $ approaches an isometry of spaces $ \Omega^q(M) $ and $ C^q(X^i) $ as $ i \rightarrow \infty $. More precisely:
\begin{definition}
A sequence of Riemannian structures on $ \{ X^i \} $ is consistent if for any $ f \in \Omega^q(M) $ there exists a constant $ \eta_f $ such that, for all $ i $:
\begin{align*}
\Big| \langle I_{X^i} f, I_{X^i} f \rangle_{C^q(X^i)} - \langle f, f \rangle_{\Omega^q(M)} \Big| \leq \eta_f \; \epsilon_{X^i}
\end{align*}
\end{definition}

\subsection{The Combinatorial Laplacian}
Let $ X $ be a complex with a Riemannian structure. Let $ L \subset \partial X $ be a subcomplex of $ \partial X $. Let $ d_{q,L} : C^q(X,L) \rightarrow C^q(X,L)   = \pi_L d $ where $ \pi_L : C^q(X) \rightarrow C^q(X,L) $ is orthogonal projection. The combinatorial codifferential combinatorial codifferential $ \delta_{q,L}: C^q(X) \rightarrow C^{q-1}(X) $ is defined the adjoint of $ d_q $:
\begin{align*}
\langle d_{q,L} f , g \rangle_{C^{q+1}(X,L)} = \langle f, \delta_{q+1,L} g \rangle_{C^q(X,L)} 
\end{align*}
for all $ f \in C^q(X), g \in C^{q+1}(X) $.
\begin{remark}
Choosing the canonical basis for the $ C^*(X) $ and writing the scalar products $ \langle , \rangle_{C^q(K)} $ as matrices $ g_q $, the codifferential $ \delta_q $ can be written in matrix form as:
\begin{align*}
\delta_q = g_{q-1}^{-1} d_q^{T} g_{q} 
\end{align*}
From this, it is clear that generally $ \delta_q $ is $ m $-local (for any $ m $!) only if the scalar product $ g_{q-1} $ is diagonal.
\end{remark}

\subsubsection{Duality} In this section, we briefly explain how Hodge duality arises in this combinatorial framework. Let $ X $ be a complex with a diagonal Riemannian structure, and as above $ L \subset \partial X $. Let $ \widetilde{L} \subset \partial \widetilde{X} $ be the complement of the image of $ L$ under the bijection $ \partial X_q \rightarrow \partial \widetilde{X}_{1-q} $, and $ \widetilde{X}_L $ be the complex found by removing from $ \widetilde{X} $ the image of $ L$ under both bijections $ X_q \rightarrow (\widetilde{X} \backslash\partial\widetilde{X})_{2-q} $ and $ \partial X_q \rightarrow \partial \widetilde{X}_{1-q} $.

There is a bijection of $ X_q \backslash L_q $ and $ X_{2-q} \backslash \widetilde{L}_{2-q} $, which induces an isomorphism $ \phi:  C_q(X,L) \rightarrow C_{2-q}(\widetilde{X}_L, \widetilde{L} )$. Putting a Riemannian structure on $ \widetilde{X}_L $ via the isomorphism, $ \phi $ becomes an isomorphism of chain complexes $ (C_q(X,L), d_L ) $ and  $ ( C_{2-q}  (\widetilde{X}_L,\widetilde{L}), \delta_{\widetilde{L}}) $; ie. $ \phi \circ d_{q,L} = \widetilde{\delta}_{q,\widetilde{L}} \circ \phi $ and $ \phi \circ \delta_{q,L} = \widetilde{d}_{q,\widetilde{L}} \circ \phi $. 

\subsubsection{The Combinatorial Laplacian} The combinatorial Laplacian is defined as $ \Delta_{q,L}= \delta_{q+1,L} d_{q,L} + d_{L,q-1} \delta_{q,L} $. This Laplacian should be understood as corresponding to mixed boundary conditions, with relative (Dirichlet) boundary conditions on $ L$ and absolute (Neumann) boundary conditions on $ \partial X \backslash L $. In particular, since we focus attention on $ 0$-forms, we let:
\begin{align*}
\Delta_L = \delta_{1,L} d_{0,L}
\end{align*}

For a diagonal Riemannian structure, it follows by straightforward computation that for $ p \in X_0 \backslash L $, the combinatorial laplacian can be written:
\begin{align}\label{eq:comblap}
\Delta_L f(p) = \sum_{ \substack{ \sigma \in X_1 \\ p \; \cap \; \sigma \neq 0 \\ \sigma \; \cap \;  L = 0 } } \frac{ \left| \sigma \right|}{|p|} \left( f(q) - f(p) \right)
\end{align}

\subsubsection{Consistency of the Combinatorial Laplacian} \label{sec:lapcons}
 Fix a submanifold $ L \subset \partial M $, and let $ \Delta $ be the Laplace-Beltrami operator acting on $C^\infty $ functions with Dirichlet boundary conditions on $L $ and Neumann on $ \partial M \backslash L $. Consider a sequence of graphs $ \{ X_i \} $ approximating $ (M,g) $ as in Section $ \ref{sec:approx} $, and such that for each $ i $ there is a subcomplex $ L_i \subset \partial X_i $ that is a decomposition of $ L $. In this context:

\begin{definition} The sequence of combinatorial laplacians $ \Delta_i $  on $ X_i $ is said to be consistent if for any $ f \in C^\infty(M) $ there exists a constant $ \eta_f $ such that, for all $ i $:
\begin{align} \label{eq:clap}
\Big| \Delta_L^i I_{X_i} f  - I_{X_i} \Delta f \big | \leq \eta_f \; \epsilon_{X_i}^2
\end{align}
\end{definition}

\subsection{Gaussian Field Theories on Graphs} \label{subsection:fieldt}
In this section, we briefly outline the definition of discrete Guassian field theories on graphs with Riemannian struture.
\subsubsection{Bosonic Free Field Theory}
Let $ X $ be a complex with a Riemannian structure. The (massless) free field action, or Dirichlet energy, is the quadratic form $S: C^0(X) \rightarrow \mathbb{R}^+ $:
\begin{align*}
S(f) = \langle df, df \rangle_{C^1(X)} 
\end{align*}
for $ f \in C^0(X) $.

The space of fields $ \mathcal{H} \subset C^0(X) $ of the discrete field theory is given by prescribing boundary conditions as follows: let $ L \subset \partial X $ be a nonempty subcomplex  and let $ \mathcal{H} = C^0(X, L) $.The Bosonic scalar field theory is then defined by the partition function:
\begin{align*}
Z_b(X,L) = \int_\mathcal{H}  e^{-S(\psi)} \; \mathcal{D}\psi
\end{align*}
where the measure $ \mathcal{D} \psi $ on $ \mathcal{H} $ is induced by the scalar product on $ C^0(X) $. The correlation functions are defined in the usual way as:
\begin{align*}
\langle \psi(p_1) \cdots \psi(p_n) \rangle = \frac{1}{Z_b(X,L)} \int_\mathcal{H} \psi(p_1) \cdots \psi(p_n)  e^{-S(\psi)} \; \mathcal{D} \psi
\end{align*}
Using standard Gaussian integral formulae give:
\begin{align*}
\log Z_b(X,L) = -\frac{1}{2} \log{\det(\Delta_L) } +\frac{1}{2} \left( \dim \mathcal{H} \right) \log( 2 \pi)
\end{align*}
Similarly, correlation functions can be expressed in terms of matrix elements of $ \Delta^{-1}_L $. For exampe, the two point function:
\begin{align*}
\langle \psi(p_1) \psi(p_2) \rangle = \left(\Delta_L^{-1} \right)_{p_1,p_2}
\end{align*}

\subsubsection{ Fermionic Free Field Theory}
As before, let $ L \subset \partial X $ be a subcomplex of $ \partial X $. Let $ S^+ = C^0(X,L) \oplus C^2(X,L) $ and $ S^- = C^1(X,L)$. The combinatorial Dirac operator $ D_L: S^+ \rightarrow S^- $ is the restriction of $ d_L + \delta_L $ to $ S^+ $, and the adjoint $ D_L^\dagger $  is the restriction of $ d_L + \delta_L $ to $ S^- $. This operator should be regarded as a combinatorial analog of the Hodge-Dirac operator.

The space of fields for the free fermionic field theory is $ \mathcal{H^+} = C^0(X, L) \oplus C^2(X,L) $ and $ \mathcal{H^-} = C^1(X,L) $.  The theory is defined by partition function given by the Grassmanian integral over $ \mathcal{H} = \wedge( \mathcal{H}^+ \oplus \mathcal{H}^- ) $:
\begin{align} \label{eq:ferm}
Z_f(X,L) = \int_{\mathcal{H}^+} \int_{\mathcal{H}^-} \exp\left( \frac{1}{2} \langle \psi, D_L \chi \rangle \right) \mathcal{D} \chi \mathcal{D} \psi
\end{align}
where the measure $ \mathcal{D} \chi \; \mathcal{D} \psi $ is induced by the scalar products on $ S^+ $ and $ S^- $. Correlation functions are once again defined by insertions into the path integral:
\begin{align*}
\langle \psi(p_1) \chi(p_1) \cdots \psi(p_n) \chi(p_n) \rangle& \\ = \frac{1}{Z_f(X,L) } \int_{\mathcal{H}^+} & \int_{\mathcal{H}^-}  \psi(p_1) \chi(p_1) \cdots \psi(p_n) \chi(p_n) \exp\left( \frac{1}{2} \langle \psi, D_L \chi \rangle \right) \mathcal{D} \chi
\end{align*}

We can explicitly calculate as follows.  Choosing the standard basis $ \{ \chi_i \} = X_1 \backslash \partial X_1$ for $ \mathcal{H}^- $ and $ \{ \psi_j \} = X_0 \backslash \partial X_0 \cup X_2  $ for $ \mathcal{H}^+ $, one can write the measure as:
\begin{align*}
\mathcal{D} \chi \mathcal{D} \psi = \prod_{i} \frac{1}{ \sqrt{ | \chi_i | } } \; d \chi_i \; \; \prod_{j}  \frac{1}{\sqrt{| \psi_j | } } \; d \psi_i
\end{align*}
Using the standard Grassman integral formulae, the partition function (\ref{eq:ferm}) can be evaluated as:
\begin{align} \label{eq:ferm}
Z_f(X,L) = \frac{ \prod  \sqrt{ | \psi_i | }  } {  \prod \sqrt{ | \chi_j | } }  \det \left( D_L \right)
\end{align}

The fermionic field theory is closely related to the bosonic theory. First define the dual fermionic theory as:
\begin{align}
\widetilde{Z}_f(X,L) = \int_{\mathcal{H}^+} \int_{\mathcal{H}^-} \exp\left( \frac{1}{2} \langle D_L \psi, \chi \rangle \right) \mathcal{D} \chi \mathcal{D} \psi
\end{align}
Then it is straightforward to show that:
\begin{align*}
Z_f(X,L) \widetilde{Z}_f(X,L)  = \big( Z_b(\widetilde{X}_L, \widetilde{L} ) \; Z_b( X , L)  \big)^{-1}
\end{align*}

\subsubsection{The Dimer Model}
The free fermionic theory defined above is closely related to the dimer model of statistical mechanics. First we briefly recall the definition of the dimer model. On a graph $  G $ with positive edge weights $ w \in C^1(G) $, a perfect matching $ M $ of $ G $ is a subset $ M \subset G_1 $ of edges such that each vertex $ v $ is contained in exactly one edge $ e \in M $. The weight of the matching $ M $ is  $ \prod_{e \in M } w(e) $. This defines a Gibbs measure $\mu $ on the set $ \mathcal{M} $ of all perfect matchings: the probability of a matching $ M $  is:
\begin{align*}
\mu(M,w) =  \frac{w(M)}{Z(G,w)}
\end{align*}
where the partition function $ Z(G,w) = \sum_{M \in \mathcal{M}} \omega(M) $. A gauge transformation of the dimer model is a function $ \lambda \in C^0(G) $ that acts on the edge weights as $ (\lambda \cdot w) \big( [p_1,p_2] \big) = w( [p_1,p_2]) \lambda(p_1) \lambda(p_2) $. The gauge transformation leaves the measure $ \mu $ invariant, but transforms the partition function as:
\begin{align} \label{eq:gauge}
Z \left( G, (\lambda \cdot w) \right)  = \left(\prod_{p \in G_0} \lambda(p) \right)Z(G,w)
\end{align}

The dimer model is exactly solvable by the Kasteleyn technique as follows: for simplicity, assume $ G $ is bipartite with black vertices $ \{ b_i \}_{i = 1 \cdots n} $ and white vertices $ \{ w_i \}_{j = 1 \cdots n} $. A Kasteleyn orientation is a function $ \widetilde{w}: G_1 \rightarrow \{ 1,-1 \} $, such that the product of $ \widetilde{w} $ on edges around any face is either $ -1 $ if the number of edges is divisible by four, or $ +1 $ otherwise. The Kasteleyn matrix is the $ n \times n $  matrix with columns and rows indexed by black and white vertices respectively, and with matrix elements
\begin{align*}
K_{b_i w_j} = \widetilde{w}( [ b_i w_j ] ) w( [ b_i, w_j ]) 
\end{align*}
when $ b_i $ and $ w_j $ are adjacent, and zero otherwise.

Kasteleyn's theorem is that the partition function of the dimer model is given by $ Z(G,w) = \det(K) $. In addition, the Kasteleyn matrix encodes the statistics of dimer model via the local statistics theorem, which gives correlation functions of the dimer model in terms of the inverse Kasteleyn matrix.

We now define a dimer model on the double of a complex $ X $ endowed with a local Riemannian structure. As before, let $ L \subset \partial X $ be a non-empty subgraph of the boundary. Define an the edge weight function $ w_r \in C^1(D(X))$ on the double as:
\begin{align*}
w_r( [p_\sigma, p_\tau] ) &= \sqrt{ \frac{\langle \sigma, \sigma \rangle_{C_{q}(G)}}{\langle \tau, \tau \rangle_{C_{q+1}(G)}} } 
\end{align*}
In writing the above formula, we have assumed $ \sigma $ is of smaller dimension than $ \tau $  in $ X $. Let $ D(X)' $ be the graph found by attaching to each $ p_\sigma $ with $ \sigma \in \partial X $ a unit weighted edge and an additional univalent vertex. (See figure \ref{fig:fig3}). Let $  Z_d(X, w_r,  L) $ be the dimer partition function of the dimer model defined by the graph $ D(X)' $ with weight $ w_r $. 

\begin{figure}[h]
\label{fig:fig3}
\begin{tikzpicture}

\coordinate (p0) at  (-1.5,.5);
\coordinate (p1) at (-1,-.2);
\coordinate (p2) at (-1,1.3) ;
\coordinate (p3) at (0,-.3);
\coordinate (p4) at (-.3,.5);
\coordinate (p5) at (.7,1.1);
\coordinate (p6) at (1,0);

\coordinate (c0) at ($ .5*(p0)+.5*(p1) $);
\coordinate (c1) at ($ .5*(p0)+.5*(p2) $);
\coordinate (c2) at ($ .5*(p2)+.5*(p4) $);
\coordinate (c3) at ($ .5*(p4)+.5*(p5) $);
\coordinate (c4) at ($ .5*(p5)+.5*(p6) $);
\coordinate (c5) at ($ .5*(p6)+.5*(p3) $);
\coordinate (c6) at ($ .5*(p3)+.5*(p1) $);
\coordinate (c7) at ($ .5*(p3)+.5*(p4) $);
\coordinate (c8) at ($ .5*(p2)+.5*(p5) $);

\coordinate (d0) at ( $ .2*(p0)+.2*(p1)+.2*(p3)+.2*(p4)+.2*(p2) $ );
\coordinate (d1) at ( $ .25*(p3)+.25*(p4)+.25*(p5)+.25*(p6) $ );
\coordinate (d2) at ( $ .3333*(p2)+.3333*(p4)+.3333*(p5) $ );

\draw[color=black]
(p0)--(p1)
(p0)--(p2)
(p2)--(p4)
(p1)--(p3)
(p3)--(p4)
(p4)--(p5)
(p3)--(p6)
(p5)--(p6)
(p2)--(p5);

\draw[very thick]
(p3)--(p6)
(p6)--(p5);

\foreach \i in {0,...,6}
{
\fill (p\i) circle (2pt);
}

\end{tikzpicture} \; \; \; \begin{tikzpicture}

\draw[color=black]
(p0)--(p1)
(p0)--(p2)
(p2)--(p4)
(p1)--(p3)
(p3)--(p4)
(p4)--(p5)
(p3)--(p6)
(p5)--(p6)
(p2)--(p5);

\draw[color=black]
(d0)--(c1)
(d0)--(c0)
(d0)--(c2)
(d0)--(c7)
(d0)--(c6)
(d1)--(c7)
(d1)--(c5)
(d1)--(c4)
(d1)--(c3)
(d2)--(c3)
(d2)--(c2)
(d2)--(c8);

\foreach \i in {0,...,6}
{
\fill (p\i) circle (2pt);
}

\foreach \i in {0,...,8}
{
\fill[color=white] (c\i) circle (	2pt);
\draw (c\i) circle (2 pt);
}

\foreach \i in {0,...,2}
{
\fill (d\i) circle (2pt);
}

\coordinate (b0) at (1.3,1.2);
\coordinate (b1) at (1.5,-.2);
\coordinate (b2) at (.1,-.8);
\coordinate (b3) at (1.4,.6);
\coordinate (b4) at (.7,-.7);

\foreach \i in {0,...,4}
{
\fill (b\i) circle (2pt);
}

\draw[color=black]
(p5)--(b0)
(p6)--(b1)
(p3)--(b2)
(c4)--(b3)
(c5)--(b4);

\end{tikzpicture}
\caption{A graph with boundary conditions in bold and the associated dimer model.}
\end{figure}
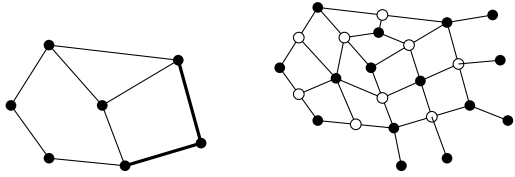

\begin{prop}
\begin{align*}
Z_d(X,w_r, L) = Z_f(X,L)
\end{align*}
\end{prop}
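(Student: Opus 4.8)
The plan is to evaluate $Z_d(X,w_r,L)$ by Kasteleyn's theorem and match it term by term with the Grassmann evaluation of $Z_f(X,L)$. First I would fix a Kasteleyn orientation $\widetilde w$ on the planar bipartite graph $D(X)'$, so that Kasteleyn's theorem gives $Z_d(X,w_r,L)=\det K$, with $K$ the Kasteleyn matrix whose rows are indexed by the white and columns by the black vertices of $D(X)'$. Since $D(X)'$ is built from $D(X)$ by attaching a unit-weight pendant edge and a univalent vertex at each $p_\sigma$ with $\sigma$ a cell of $L$ (Figure~\ref{fig:fig3}), in every perfect matching that univalent vertex is matched along its pendant edge, so $p_\sigma$ is forced for all $\sigma\in L$. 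Laplace-expanding $\det K$ along the pendant rows and columns then yields $\det K=\varepsilon\cdot\det K'$, where $\varepsilon$ is a sign built from the (unconstrained) pendant-edge orientations and a permutation sign, and $K'$ is the Kasteleyn matrix of the core graph $D(X)\setminus\{p_\sigma:\sigma\in L\}$; choosing the pendant orientations appropriately makes $\varepsilon=1$. The bipartite classes of this core graph, $\{p_\alpha:\alpha\in X_1\setminus L_1\}$ and $\{p_\sigma:\sigma\in(X_0\setminus L_0)\cup X_2\}$, are in canonical bijection with the bases $\{\chi_j\}$ of $\mathcal H^-=C^1(X,L)$ and $\{\psi_i\}$ of $\mathcal H^+=C^0(X,L)\oplus C^2(X,L)$; and these two vertex sets have equal cardinality, so that $K'$ is square, precisely when $\dim S^+=\dim S^-$, i.e.\ $|L_0|-|L_1|=\chi(M)=1$, so the statement implicitly takes $L$ to be an arc.

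Second, I would identify $K'$ with the matrix of the combinatorial Dirac operator $D_L$. Every edge of the core graph has the form $[p_\alpha,p_\beta]$ with $\alpha\in X_1,\beta\in X_0$, or $[p_\alpha,p_\gamma]$ with $\alpha\in X_1,\gamma\in X_2$, so it records a nonzero entry of $d_0$ or of $\delta_2$, the two blocks of $D_L$. Unwinding the definition of $w_r$ together with the matrix identity $\delta_q=g_{q-1}^{-1}d_q^{T}g_q$ from the Remark in \S\ref{sec:sec2} and the diagonality of the Riemannian structure, a short computation shows that after rescaling each canonical basis vector $\psi_i,\chi_j$ to the unit vector $\psi_i/\sqrt{|\psi_i|},\chi_j/\sqrt{|\chi_j|}$, the matrix of $D_L$ takes the value $\widetilde w(e)\,w_r(e)$ on each edge $e$ of the core graph --- that is, it becomes $K'$. (This is precisely the reason behind the particular form of the weight $w_r$.) Since the Jacobian of this rescaling is $\prod_j\sqrt{|\chi_j|}\big/\prod_i\sqrt{|\psi_i|}$, we obtain $\det K'=\big(\prod_i\sqrt{|\psi_i|}\big/\prod_j\sqrt{|\chi_j|}\big)\det(D_L)$, which is the evaluation of $Z_f(X,L)$ in~(\ref{eq:ferm}), up to sign.

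Third, to pin down those signs I would make a specific choice of Kasteleyn orientation, namely $\widetilde w([p_\tau,p_\sigma])=[\tau:\sigma]$, the incidence number of the oriented complex $X$ (with $\dim\sigma<\dim\tau$), extended arbitrarily across pendant edges. Each $2$-cell of $D(X)$ is a quadrilateral $[p_\alpha,p_{\beta_1},p_\gamma,p_{\beta_2}]$ with $\alpha\in X_0$, $\beta_1,\beta_2\in X_1$, $\gamma\in X_2$ pairwise adjacent, and the product of $\widetilde w$ around its boundary is $[\beta_1:\alpha][\gamma:\beta_1][\gamma:\beta_2][\beta_2:\alpha]$; applying $d_1d_0=0$ to the pair $(\alpha,\gamma)$ gives $[\gamma:\beta_1][\beta_1:\alpha]=-[\gamma:\beta_2][\beta_2:\alpha]$, so this product is $-1$, exactly the Kasteleyn condition for a quadrilateral face. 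I would then check the condition on the larger faces of $D(X)'$ produced by deleting the $p_\sigma$, $\sigma\in L$, and by the boundary truncation; as $D(X)'$ is planar, this can always be repaired by a $\{\pm1\}$-valued gauge transformation supported near those faces. With this orientation the entries of $K'$ are literally the matrix elements of $D_L$ in the orthonormal bases, signs included, so $\det K'$ equals the right-hand side of~(\ref{eq:ferm}) on the nose; combined with $\det K=\det K'$ and Kasteleyn's theorem, this gives $Z_d(X,w_r,L)=Z_f(X,L)$.

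The hardest part will be the sign bookkeeping: coordinating the permutation signs from the pendant expansion, the signs of the Kasteleyn orientation, and the sign of $\det(D_L)$ fixed by the chosen ordering of cells, so that no stray $\pm1$ survives. This should go through because $Z_d$ is visibly a positive sum over matchings while $Z_f$ is also positive --- $\det(D_L^{\dagger}D_L)=\det(D_L)^2>0$, since $D_L$ is injective on the disk, and the orientation of $X$ can be chosen so that $\det(D_L)>0$ --- but verifying that these two positive normalizations \emph{literally coincide}, rather than merely being both positive, is the subtle step. A secondary technical issue is to pin down the cell structure of $D(X)'$ near $L$ and near $\partial X$ precisely enough to guarantee that a Kasteleyn orientation exists there in the first place.
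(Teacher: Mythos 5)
Your proposal is correct and follows essentially the same route as the paper's proof: force the boundary matchings via the pendant edges, absorb the $\sqrt{|\sigma|}$ factors (your basis rescaling is exactly the paper's dimer gauge transformation), identify the resulting Kasteleyn matrix with $D_L$, and verify the Kasteleyn sign condition on the quadrilateral faces of $D(X)$ using $d^2=0$ and $\delta^2=0$. Your extra attention to the sign bookkeeping and to the dimension count $\dim S^+=\dim S^-$ (which the paper leaves implicit) is a genuine refinement of detail, not a different argument.
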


 This proposition states essentially that $ D_\mathcal{L}$ is a Kasteleyn matrix for the dimer model $ ( D(X)', w_r, L) $. It follows then that the correlation functions of the fermionic theory also give the correlation functions of the dimer model. 
\begin{proof} First, note that the dimer model on $ D(X)' $ is equivalent to the dimer model found by removing from $ D(X) $ vertices $ p_\sigma $ with $ \sigma \in L $, and edges adjacent to them. In addition, applying the gauge transformation
\begin{align*}
\lambda( p_\sigma) =\begin{cases} 1 / \sqrt{ | \sigma | } \ &\mbox{if } \sigma \in X_0, X_2 \\
 \sqrt{ | \sigma | }   &\mbox{if } \sigma \in X_1 \\ \end{cases}  
\end{align*}
on the remaining vertices transforms the edge weights into:
\begin{align*}
w' \left( [p_\sigma, p_\tau] \right) &= \begin{cases} 1 &\mbox{ if } \sigma \in X_0, \tau \in X_1 \\
 | \sigma |  / | \tau |    &\mbox{if } \sigma \in X_1, \tau \in X_2 \\ \end{cases} 
\end{align*}
Comparing the action of the gauge transformation (\ref{eq:gauge}) and the explicit formula for the fermionic partition function (\ref{eq:ferm}), it remains to show that:
\begin{align*}
Z_d(D(X) \backslash L, w') = \det( D_L )
\end{align*}
Recall that the double $ D(X) $ has a bipartite structure. Regarding $ D_\mathcal{H} $ as map operator from white vertices in  $ D(X)$  (edges in $ X $ ) to black vertices $ D(X)$ (vertices and faces in $ X$ we compute for $ \tau \in X_1 $:
\begin{align*}
D \tau = \sum_{\sigma \in X_0} (-1)^{(\sigma, \tau)} \sigma + \sum_{\sigma \in X_2} (-1)^{(\sigma, \tau)} \frac{| \sigma | }{ | \tau |}  \sigma
\end{align*}
where $ (-1)^{(\sigma, \tau)} $ is the function that is 0 if $ \sigma $ is not adjacent to $ \tau $, and is $ \pm 1 $ if the orientations of $ \sigma $ and $ \tau $ are compatible.
That is, up to a sign, the matrix element $ D_{\sigma,\tau} $ is equal to the edge weight $ w'([p_\sigma, p_\tau]) $. It remains to check that the signs of the matrix elements of $ D $ satisfy the Kasteleyn orientation condition. Since every face in the double $ D(X) $ is quadrilateral, the Kasteleyn condition requires the number of minus signs around each is odd. This follows from $ d^2 = 0 $ and $ \delta^2 = 0 $. 
\end{proof}

\section{Combinatorial Laplacians in Rectangular Graphs} \label{sec:sec3}
In the remainder of this paper, we focus on the particular case that $ U $ is a rectangular region with a diagonal metric $ g = g_{xx}\; dx^2 + g_{yy} \; dy^2 $, approximated by subgraphs of $ \mathbb{Z}^2 $. For simplicity, we assume that both the metric and the induced Riemannian distance function are analytic. In addition, denote by $\partial M_i $ for $ i = 1, \cdots, 4 $ be the four straight segments of $ \partial M $, and let boundary conditions be specified by $ L \subset \partial M $, where $ L$ is the union of a non-empty subset of $ \{ \partial M _i \} $.

Let $ X^0 $ be the cell decomposition of $ M $, consisting of four $0$-cells placed at the corners of $ U$, four $1$-cells, and one $ 2 $-cell. Let $ \{ ( X^i, \langle , \rangle_{C^q(X^0} ) \}_{i \in \mathbb{N} } $ be a sequence of subdivisions by iteratively taking doubles, endowed with a local, consistent Riemannian structure and consistent combinatorial laplacians $ \Delta^i_L $. 

\begin{thm} \label{thm:maintheorem}
In the above setting, the determinant of the combinatorial laplacian $ \Delta^i $ has the following expansion:
\begin{align} \label{eq:finalasym}
\log \det \Delta^i = \frac{1}{\epsilon_i^2 } \int_M F(g) \; dx \; dy + \frac{1}{\epsilon_i}\int_{\partial M} B(g) \cdot ds + \frac{1}{2} \log(\epsilon^i)  + O(1)
\end{align}
where:
\begin{align} \label{eq:densities}
F(g_{xx},g_{yy}) &= \frac{2}{\pi}  \Im \left( \text{\emph{Li}}_2 \left(i \sqrt{\frac{g_{yy}}{g_{xx}}} \right) + \text{\emph{Li}}_2 \left(i \sqrt{\frac{g_{xx}}{g_{yy}}}\right) \right) - \log\left(\sqrt{g_{xx} g_{yy}}\right) \notag \\
B_x(g_{xx},g_{yy}) &= -\frac{1}{2} \log( \sqrt{g_{xx} g_{yy}} + \sqrt{ g_{xx} g_{yy} + g_{yy}} ) \\
B_y(g_{xx},g_{yy}) &= -\frac{1}{2} \log( \sqrt{g_{xx} g_{yy}} + \sqrt{ g_{xx} g_{yy} + g_{yy}} ) \notag
\end{align}
where $ \text{\emph{Li}}_2 $ is the dilogarithm function.
\end{thm}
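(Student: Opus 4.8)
## Proof Proposal

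The plan is to reduce the asymptotic determinant of $\Delta^i$ to that of a \emph{standard} discrete Laplacian on a rectangular subgraph of $\mathbb{Z}^2$ with suitably modulated edge weights, and then compute the asymptotics of the latter by a Fourier/contour-integral analysis. The first step is a normalization: since the combinatorial Laplacian on a diagonal Riemannian structure is given by (\ref{eq:comblap}), on the square subdivision $X^i$ the operator $\Delta^i_L$ acts as a nearest-neighbour difference operator whose horizontal and vertical coupling constants at a vertex near $(x,y)\in M$ are, to leading order in $\epsilon_i$, determined by the ratios $|\sigma|/|p|$ of edge to vertex volume weights. Using the consistency hypothesis (\ref{eq:clap}) together with the analyticity of $g$, I would show that these weights admit asymptotic expansions in $\epsilon_i$ whose leading coefficients are explicit functions of $g_{xx}(x,y)$ and $g_{yy}(x,y)$; concretely, the horizontal coupling is $\sim \sqrt{g_{yy}/g_{xx}}$ and the vertical coupling is $\sim \sqrt{g_{xx}/g_{yy}}$, with an overall volume factor $\sqrt{g_{xx}g_{yy}}$ pulled out of the determinant (this overall factor produces the $-\log\sqrt{g_{xx}g_{yy}}$ term in $F$ after integrating $\log$ of the Jacobian over $M$). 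This matches section \S\ref{sec:sec4}, where $(\Delta^i)^{-1}$ is estimated.

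Next, with the metric-dependence localized into slowly-varying edge weights, I would compute $\log\det\Delta^i$ by the standard strategy for determinants of discrete Laplacians on rectangles: write $\log\det\Delta^i = \mathrm{Tr}\log\Delta^i$ and use a Duhamel/resolvent expansion, or alternatively use the variational formula $\delta\log\det\Delta = \mathrm{Tr}(\Delta^{-1}\delta\Delta)$ to interpolate from the constant-metric case, where the determinant is known in closed form (a generalization of (\ref{eq:eq2})). The constant-coefficient case on an $N\times M$ grid with anisotropy ratio $\rho = \sqrt{g_{yy}/g_{xx}}$ can be diagonalized in Fourier modes; the bulk free energy per site is $\frac{1}{(2\pi)^2}\int\!\!\int \log\big(\rho(2-2\cos\theta) + \rho^{-1}(2-2\cos\phi)\big)\,d\theta\,d\phi$, and evaluating this double integral (doing the $\phi$-integral first via $\int_0^{2\pi}\log(a-2\cos\phi)\,d\phi = 2\pi\log\frac{a+\sqrt{a^2-4}}{2}$, then the remaining $\theta$-integral) produces the dilogarithm combination $\tfrac{2}{\pi}\,\Im\big(\mathrm{Li}_2(i\sqrt{g_{yy}/g_{xx}}) + \mathrm{Li}_2(i\sqrt{g_{xx}/g_{yy}})\big)$ via the standard integral representation of $\mathrm{Li}_2$. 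The boundary term $B$ comes from the subleading $O(N+M)$ correction in the rectangle determinant, which for a weighted grid with Dirichlet/Neumann mixed conditions on $L$ is the analogue of the $2(N+M)\log(\sqrt2-1)$ term; tracking its weight-dependence gives the stated $-\tfrac12\log(\sqrt{g_{xx}g_{yy}} + \sqrt{g_{xx}g_{yy}+g_{yy}})$ expression. The $\tfrac12\log\epsilon_i$ term is the curvature/corner contribution predicted by finite-size scaling with central charge $c=1$; here it should emerge from the $\tfrac14\log(NM)$ term of (\ref{eq:eq2}) combined with the two independent large parameters $N\sim M\sim 1/\epsilon_i$.

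To pass from the constant-metric computation to the variable-metric one, I would patch: partition $M$ into $\sim \epsilon_i^{-\alpha}$ mesoscopic rectangles (for suitable $0<\alpha<1$) on each of which $g$ is nearly constant, apply a trace-norm estimate to control the error in $\log\det$ from freezing the coefficients on each patch and from the interfaces between patches, and then pass to the Riemann sum $\sum_{\text{patches}} (\text{local bulk free energy}) \to \epsilon_i^{-2}\int_M F(g)$. The key analytic inputs are: (i) the off-diagonal decay estimates for $(\Delta^i)^{-1}$ established in \S\ref{sec:sec4}, which bound the trace-class norm of the difference between the true and frozen resolvents by a power of $\epsilon_i$ times the number of patches; and (ii) uniform invertibility, i.e.\ a lower bound on the smallest eigenvalue of $\Delta^i_L$ (available since $L\neq\emptyset$, giving a spectral gap of order $1$ from the Dirichlet boundary).

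The main obstacle I anticipate is controlling the \textbf{boundary and corner contributions} rigorously. The bulk term is robust — it only needs Riemann-sum convergence and trace estimates — but the $\epsilon_i^{-1}\int_{\partial M}B(g)\,ds$ term and especially the $\tfrac12\log\epsilon_i$ term require understanding the determinant asymptotics near the edges and at the four corners of $M$, where the mixed Dirichlet/Neumann conditions interact with the varying metric. Freezing coefficients near the boundary is delicate because a mesoscopic patch touching $\partial M$ is itself a weighted rectangle with its own boundary expansion, and one must show the corner defects are $O(1)$ and do not contaminate the $\log\epsilon_i$ coefficient. I would handle this by a separate half-plane (and quarter-plane) analysis: model a neighbourhood of each straight edge by a weighted discrete half-plane Laplacian, compute its boundary free energy by a one-dimensional Fourier transform along the edge, and show the corners contribute only to the $O(1)$ term by an explicit (Wiener–Hopf–type) factorization, or by appealing to the known corner contributions in \cite{Kenyon2000} after a local change of variables that flattens the metric to leading order.
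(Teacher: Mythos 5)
Your outline shares the paper's organizing principle at one point --- you mention, as an alternative, the variational formula $\delta\log\det\Delta=\mathrm{Tr}(\Delta^{-1}\delta\Delta)$ with the constant of integration fixed by the flat anisotropic case, and your Fourier evaluation of the constant-coefficient bulk density does reproduce the dilogarithm in $F$. But your \emph{primary} route, freezing the coefficients on $\sim\epsilon^{-\alpha}$ mesoscopic patches and summing per-patch rectangle determinants, has a genuine gap. If you impose boundary conditions on the patch interfaces so that each frozen patch is an honest rectangle determinant, every interface carries a surface free energy of order $\epsilon^{-1}$ per unit length (the analogue of the $2(N+M)\log(\sqrt2-1)$ term), and the total artificial interface length diverges as $\epsilon\to 0$; these contributions must cancel exactly between adjacent patches, and proving that cancellation is a gluing formula for discrete determinants --- precisely the kind of statement the paper lists in \S\ref{sec:sec6} as an open problem, not a tool. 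If instead you do not decouple the patches, the ``frozen'' operator is a patchwork with discontinuous coefficients and its determinant is no longer computable by Fourier modes, so there is nothing to take a Riemann sum of. Relatedly, a trace-norm bound on the difference of resolvents does not by itself give $O(1)$ control of the difference of $\log\det$'s: the logarithm is singular at the bottom of the spectrum (the lowest eigenvalue of $\Delta^i_L$ is of order $\epsilon^2$, not order $1$, since the operator in (\ref{eq:comblap}) is the \emph{unrescaled} one), so one must integrate the resolvent difference over a spectral contour approaching that edge, and the estimate you invoke is not uniform there.

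The paper sidesteps all of this by never patching the determinant: it commits fully to the variational formula (\ref{eq:varf}), which only requires the finite-difference derivatives of $(\Delta^i_L)^{-1}$ \emph{on the diagonal}. These are obtained from a single global parametrix $G^i$, built by taking the smooth Green's function of $(M,g)$, isolating its logarithmic singularities (including the reflected and corner image singularities near $\partial M$ via the method of images), and replacing each by Kenyon's lattice-regularized logarithm $\L_{(a,b)}$ evaluated at the frozen metric $g(p_1)$. The error $R^i=\Delta^iG^i-I$ is shown to have $O(\epsilon)$ column sums, so $(\Delta^i)^{-1}=G^i(I+R^i+\cdots)$, and a parity (even--odd cancellation) argument shows the correction terms do not affect the diagonal derivatives at the relevant order. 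The trace is then evaluated by Euler--Maclaurin, with the bulk density coming from $D^2_{x}\L(0,0)=\frac{2a}{\pi}\tan^{-1}\sqrt{b/a}$, the boundary density from the image term $\L_{R_i}$, and the corners contributing only at $O(1)$. Your half-plane/Wiener--Hopf idea for the boundary is a reasonable substitute for the image-charge computation, but to make your proof work you would either have to adopt the variational-plus-parametrix route throughout (at which point you are reproducing the paper's argument) or supply the missing gluing lemma for the patching scheme.
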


The approach to calculate the determinant is to study variations of the metric $ g $. Under a variation $ g \rightarrow g + \delta g $  we have:
\begin{align}\label{eq:varf}
\frac{\delta}{\delta g} \log \det \Delta^i_ L =  \text{Tr}\left( (\Delta^i_L)^{-1} \frac{ \delta \Delta_L^i}{\delta g} \right)
\end{align}
In the next section \S \ref{sec:sec4}, we compute an estimate of the combinatorial Green's function $ \Delta^i_L $, which we use in section \S \ref{sec:sec5} to compute an estimate of (\ref{eq:varf}). The asymptotic expansion (\ref{eq:finalasym}) follows by integrating (\ref{eq:varf}), with the constant of integration fixed by the expansion (\ref{eq:eq2}), corresponding to the case that $ g = dx^2 + dy^2 $.

We conclude this section by estimating the matrix elements of the combinatorial laplacians $ \Delta^i_L $ as $ i \rightarrow \infty $.

\subsubsection{Asymptotics of the Combinatorial Laplacian}
 First, we fix some useful notation borrowed from the finite difference calculus. For clarity, we leave implicit superscripts indicating the level of subdivision when there is no danger of confusion, for example $ C^0(X) $ for $ C^0(X^i) $, $ \e $ for $ \e_{X^i} $, etc.  For $ p \in X_0 \backslash \partial X_0 $ with coordinates $ (x,y) \in \mathbb{R}^2 $, define the finite difference operators:
\begin{align*}
D_x f(x,y) &= \frac{1}{2} \left( f(x+\e, y) - f(x-\e, y) \right) \\
D_x^2 f(x,y) &= f(x+ \e,y) - 2 f(x,y) + f(x-\e, y)
\end{align*}
and similar operators in the $ y$-directions. These operators are consistent:  for every $ f \in C^\infty(U,\partial U) $, there are constants $ \eta_{f} $ such that
\begin{align} \label{eq:cfindif}
\begin{split}
\left| \frac{1}{\e} D_x I_{X} f(x, y)  - I_{X} \partial_x f(x,y) \right| \leq \e^2 \eta_f  \\
\left| \frac{1}{\e^2} D^2_x I_{X} f(x,y)  - I_{X}\partial_x^2 f(x,y) \right| \leq \e^2 \eta_f 
\end{split}
\end{align}
\begin{prop} 
Then the laplacian has the following asymptotic asymptotic expansion:
\begin{equation}
\begin{aligned} \label{eq:lapasym}
\epsilon^2 \Delta^i = \frac{1}{\sqrt{ g_{yy} g_{xx} }} \left( \sqrt{\frac{g_{yy}}{g_{xx}}} \; D_x^2 + \sqrt{\frac{g_{xx}}{g_{yy}}} \; D_y^2  + \e \; \partial_x \sqrt{\frac{g_{yy}}{g_{xx}}} \;  D_x + \e \; \partial_y \sqrt{\frac{g_{xx}}{g_{yy}}} \; D_y \right)  \\ + O(\epsilon^2)  D_x^2  + O(\epsilon^2) D_y^2 + O(\epsilon^3) D_x + O(\epsilon^3) D_y 
\end{aligned}
\end{equation}
\end{prop}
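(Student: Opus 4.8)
The plan is to compute the combinatorial Laplacian $\Delta_L^i$ explicitly from its definition as a weighted graph Laplacian and then Taylor-expand the weights using the consistency hypothesis on the Riemannian structure. Recall from equation~(\ref{eq:comblap}) that for a diagonal Riemannian structure and an interior vertex $p \in X_0 \setminus L$,
\begin{align*}
\Delta_L f(p) = \sum_{\sigma \ni p} \frac{|\sigma|}{|p|}\bigl( f(q_\sigma) - f(p) \bigr),
\end{align*}
where the sum runs over the (at most four) edges $\sigma$ of $X^i$ incident to $p$, and $q_\sigma$ is the opposite endpoint. For a rectangular subgraph of $\epsilon\mathbb{Z}^2$ obtained by iterated doubling, the interior vertex $p = (x,y)$ has four neighbours at $(x\pm\epsilon, y)$ and $(x, y\pm\epsilon)$, with edge weights $|\sigma|$ and vertex weight $|p|$ determined by the Riemannian structure. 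So the first step is to organize the four-term sum into the four finite-difference pieces: writing $a_+ = |\sigma_{x+}|/|p|$, $a_- = |\sigma_{x-}|/|p|$ for the horizontal edges and $b_\pm$ similarly for the vertical ones, one gets
\begin{align*}
\Delta_L f(p) = \tfrac{a_+ + a_-}{2} D_x^2 f + (a_+ - a_-) D_x f + \tfrac{b_+ + b_-}{2} D_y^2 f + (b_+ - b_-) D_y f,
\end{align*}
simply by matching coefficients of $f(x\pm\epsilon,y)$ and $f(x,y\pm\epsilon)$ against the definitions of $D_x, D_x^2$ and their $y$-analogues.

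The second step is to identify the asymptotics of the weight ratios $a_\pm, b_\pm$ as $\epsilon \to 0$. The consistency condition on the Riemannian structure (Section~\ref{sec:approx}) forces, to leading order, $|p| \sim \epsilon^2 \sqrt{g_{xx}g_{yy}}$ evaluated near $p$ (the Riemannian area of the dual 2-cell around $p$), $|\sigma_{x\pm}| \sim \epsilon \sqrt{g_{yy}/g_{xx}}$ evaluated near the midpoint of that edge, and $|\sigma_{y\pm}| \sim \epsilon\sqrt{g_{xx}/g_{yy}}$; this is exactly the statement that the combinatorial Laplacian is a consistent discretization of $\frac{1}{\sqrt{g_{xx}g_{yy}}}\bigl(\partial_x(\sqrt{g_{yy}/g_{xx}}\,\partial_x) + \partial_y(\sqrt{g_{xx}/g_{yy}}\,\partial_y)\bigr)$, the Laplace–Beltrami operator for a diagonal metric. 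Evaluating the analytic coefficient functions at the shifted points $(x\pm\epsilon/2, y)$ etc. rather than at $p$ and Taylor-expanding gives $\frac{a_+ + a_-}{2} = \frac{1}{\sqrt{g_{xx}g_{yy}}}\sqrt{g_{yy}/g_{xx}} + O(\epsilon^2)$ (the $O(\epsilon)$ terms cancel by the symmetric placement $\pm\epsilon/2$), while $a_+ - a_- = \epsilon\,\partial_x\bigl(\frac{1}{\sqrt{g_{xx}g_{yy}}}\sqrt{g_{yy}/g_{xx}}\bigr) + O(\epsilon^3)$; similarly in $y$. Multiplying through by $\epsilon^2$ and collecting terms yields precisely~(\ref{eq:lapasym}). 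The error bookkeeping uses the finite-difference consistency estimates~(\ref{eq:cfindif}) to control the remainders: $O(\epsilon^2) D_x^2$ means the coefficient is $O(\epsilon^2)$ uniformly, and likewise for the first-order terms which carry an extra $\epsilon$ from the antisymmetric part.

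The main obstacle is making precise the claim that consistency of the Riemannian structure pins down the weights $|\sigma|, |p|$ to the required order — a priori, the consistency axiom is an integrated (averaged) statement over all test forms $f$, and one must argue that together with locality (the structure is $0$-local, i.e. diagonal) this forces the \emph{pointwise} asymptotics $|p| = \epsilon^2\sqrt{g_{xx}g_{yy}}(p) + O(\epsilon^3)$ with analytic corrections, and similarly for edges, with the derivative-of-the-coefficient structure appearing at the next order. Concretely, one tests consistency against $f = 1$ (or coordinate functions) on small patches to extract the leading volume weights, and against quadratic test functions to pin down the next order; the analyticity assumption on $g$ and on the Riemannian distance, together with the iterated-doubling structure which keeps the combinatorics uniform, guarantees the expansions have the stated polynomial-in-$\epsilon$ form with bounded coefficients. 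Once the weight asymptotics are in hand, the rest is the routine Taylor expansion and coefficient-matching sketched above.
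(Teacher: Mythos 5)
Your first step --- writing the nearest-neighbour Laplacian (\ref{eq:comblap}) as a combination $\tfrac{a_++a_-}{2}D_x^2+(a_+-a_-)D_x+\cdots$ --- is exactly the paper's first step. But the paper then finishes in one line: consistency of the combinatorial Laplacian, equation (\ref{eq:clap}), is a \emph{hypothesis} of the setup in \S\ref{sec:sec3} (the $X^i$ carry ``consistent combinatorial laplacians $\Delta^i_L$''), and it is a pointwise bound $|\Delta^i_L I_{X_i}f - I_{X_i}\Delta f|\le \eta_f\e^2$ at every vertex. Testing this against enough smooth $f$ (linear and quadratic in the coordinates, say) at a given vertex, and converting $D_x, D_x^2$ into $\partial_x,\partial_x^2$ via (\ref{eq:cfindif}), isolates each of the four coefficients and identifies it with the corresponding coefficient of the Laplace--Beltrami operator up to the stated errors. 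No knowledge of the individual weights $|p|,|\sigma|$ is needed.

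You instead try to derive pointwise asymptotics $|p|=\e^2\sqrt{g_{xx}g_{yy}}+O(\e^3)$, $|\sigma_{x\pm}|=\e\sqrt{g_{yy}/g_{xx}}+\cdots$ from the consistency of the Riemannian structure. This is where the genuine gap lies, and you correctly flag it but your proposed repair does not close it. The consistency axiom for the inner products is an integrated ($L^2$-type) statement, $\bigl|\sum_p |p|\,f(p)^2-\int_M f^2\sqrt{g}\bigr|\le\eta_f\,\e$, with a constant $\eta_f$ depending on $f$. Since there are $\sim\e^{-2}$ vertices, errors $e_p=|p|-\e^2\sqrt{g}(p)$ of size comparable to the leading term could still sum to $O(\e)$ by cancellation; and the natural way to localize --- testing against bump functions supported on an $O(\e)$ patch --- is blocked because $\eta_f$ blows up for such $f$. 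So consistency of the inner products alone simply does not determine the weights pointwise, and your argument for the coefficient asymptotics (including the subleading $a_+-a_-$ term, where your expression $\partial_x\bigl(\tfrac{1}{\sqrt{g}}\sqrt{g_{yy}/g_{xx}}\bigr)$ in any case differs from the correct $\tfrac{1}{\sqrt{g}}\partial_x\sqrt{g_{yy}/g_{xx}}$) is not established. The fix is simply to use the assumed consistency of $\Delta^i$ itself rather than of the Riemannian structure.
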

\begin{proof}
It is straightforward to check that on the square lattice, the combinatorial laplacian (\ref{eq:comblap}) can be written as a sum of finite difference operators $D_x^2, D_y^2, D_x$ and $D_y$. (In fact, any operator that couples only nearest neighbor vertices and and annihilates constant functions can be written in terms of these finite difference operators). The proposition then follows from comparing the consistency of the finite difference operators (\ref{eq:cfindif}), and consistency of the laplacian (\ref{eq:clap}).
\end{proof}

\section{Inverse of Combinatorial Laplacian} \label{sec:sec4}
In this section, we derive an estimate for the combinatorial Green's function $ \Delta_L^i $. The approach can be summarized as follows. Since the combinatorial Laplacian is a consistent approximation of $ \Delta $, the smooth Green's function $ G $ is an approximation of the combinatorial Green's function where it is smooth. However, the Green's function $ G(p,q) $ is logarithmically divergent where $ p \rightarrow q $. We construct an approximate inverse matrix $ \widetilde{G}^i $ by replacing these logarithmic divergences with a lattice regularized logarithm function. We show that $ \widetilde{G}^i $ is a "parametrix," in the sense that for sufficiently large $ i $, the error
\begin{align*}
R = I - \Delta^i G^i
\end{align*}
tends to zero, and for sufficiently large $ i $, the inverse $ \Delta_i^{-1} $ can be written as
\begin{align} \label{eq:series}
(\Delta^i_L)^{-1} = G^i ( I + R+ R^2 + \cdots)
\end{align}

\noindent
We begin first by recalling some facts about the Greens function of the smooth setting.

\subsection{Smooth Green's Function} \label{subsec:smooth}
Recall that when the Riemmanian distance function $ d: M \times M \rightarrow \mathbb{R} $ is smooth away from the diagonal, $ G $ can be written neatly in terms of $ d $. Define $ \omega: M \times M \rightarrow \mathbb{R} $ as satisfying
\begin{enumerate}
\item $ \Delta \omega(p_1,p_2) = 0 $
\item $ \omega(p_1,p_2) = - \frac{\sqrt{g(p_1) }}{4 \pi} \log (d^2(p_1,p_2) ) $ for $ p_2 $ or $ p_1 \in \partial M $
\end{enumerate}
Then the Green's function can be written:
\begin{align}
G(p_1,p_2) =  - \frac{\sqrt{g(p_1) }}{4 \pi} \log \left(d^2(p_1,p_2) \right)  + \omega(p_1,p_2)
\end{align}
\noindent
In this decomposition, both terms are divergent: the logarithmic term contains divergences as $ p_1 \rightarrow p_2 $ while $ \omega $ contains divergences as $ p_1,p_2 \rightarrow \partial M $.

To understand the divergent behaviour as $ p_1 \rightarrow p_2 $, one can expand the distance function using the Hamilton-Jacobi equation. Letting $ (x,y) = p_2 - p_1 $, we calculate that as $ (x,y) \rightarrow 0 $:
\begin{equation} \label{eq:dist}
\begin{aligned}
d^2(p_1,p_2) =&  g_{xx} \; x^2 + g_{yy} \; y^2  \\ &+ \frac{1}{2} \partial_x  g_{xx} \; x^3 +   \frac{1}{2} \partial_y  g_{xx} \; x^2 y+  \frac{1}{2} \partial_x  g_{yy} \; x y^2+ \frac{1}{2} \partial_y  g_{yy} \; y^3\cdots
\end{aligned}
\end{equation}
Then for the logarithm of the distance function:
\begin{align}
\log\left(d^2(p_1,p_2) \right) =  \log \left( g_{xx} x^2 + g_{yy} y^2 \right) + \log \left( 1 + \frac{O(| p_1-p_2 |^3 )}{g_{xx} \; x^2 + g_{yy} \; y^2 } \right)
\end{align}
as $ (x,y) \rightarrow 0 $. The term on the right is bounded, so we have isolated the divergence into the first term.

Turning to $ \omega $, recall that although for fixed $ p_1 $, $ \omega(p_1, \cdot) $ is smooth, as $ p_1 \rightarrow \partial M $ the function and its partial derivatives diverge. One can understand the behaviour of $ \omega(p_1,\cdot) $ as $ p_1 \rightarrow \partial M $ using the method of images.(citation?). Define the reflected and the corner distance functions as: 
\begin{align}
  d_{R_i}(p_1,p_2) = & \inf \{ d(p_1,b) + d(b,p_2) \; | \; b \in \partial M_i \} \\
 d_{R_{ij}}(p_1,p_2) =  &\inf \{ d(p_1,b_1) + d(b_1,b_2) +  d(b_2,p_2) \; \\& \hspace{40pt} | \; b_1 \in \partial M_i, b_2 \in \partial M_j \text{ or }  b_1 \in \partial M_j, b_2 \in \partial M_i \}  \notag
\end{align}
The reflected distance function is well defined for all $ p_2 $ when $ d(p_1,\partial M_i) $ is sufficiently small, and similarly for the corner distance function. By definition, they satisfy $ d_{R_i}(p_1,p_2) = d(p_1,p_2) $ when $ p_1 \in \partial M_i  $, and $ d_{R_{ij}}(p_1,p_2) = d_{R_i}(p_1,p_2) $ when $ p_2 \in \partial M_j $. In addition, they satisfy the Hamilton-Jacobi equation, so that $ \Delta \log( d_{R_i}(p_1,p_2)) = 0 $ and $ \Delta \log( d_{R_{ij}}(p_1,p_2))  = 0 $.

Fix $ \delta $ such that reflected distance function is well defined when $ d_{R_i}(p_1,\cdot) $ is well defined for all $ i $ and $ p_1 $ with $ d(p_1, \partial M_i) $. Let $ B_i  = \{ p | d(p, \partial M_i) < \delta \} $ for all i, $ B = \cup_i B_i $.  For $ i \neq j $ let $ C_{ij} = B_i \cap B_j $. Redefine $ \omega $ by the decomposition:
\begin{enumerate}
\item For $ p \in C_{ij} $:
\begin{align*}
G(p_1,p_2) = &-\frac{1}{4 \pi} \log \left( d^2(p_1,p_2)  \right ) + \frac{1}{4 \pi} \log \left( d^2_{R_i}(p_1,p_2) \right)  \\ & +  \frac{1}{4 \pi} \log \left( d^2_{R_j}(p_1,p_2) \right) -  \frac{1}{4 \pi} \log \left( d^2_{R_{ij}}(p_1,p_2) \right) + \omega(p_1,p_2)
\end{align*}
\item For $ p \in B_i $:
\begin{align*}
G(p_1,p_2) = &-\frac{1}{4 \pi} \log \left( d^2(p_1,p_2)  \right ) + \frac{1}{4 \pi} \log \left( d^2_{R_i}(p_1,p_2) \right) + \omega(p_1,p_2) 
\end{align*}
\item For $ p \not\in B $:
\begin{align*} 
 G(p_1,p_2) = -\frac{1}{4 \pi} \log \left( d^2(p_1,p_2)  \right ) + \omega(p_1,p_2)
\end{align*}
\end{enumerate}
With this redefinition, it can be shown from a Schauder estimate that $ \omega(p_1,\cdot) $ is smooth and bounded for all $ p_1 \in M $, so that the divergences have been pushed into the reflected distance functions.

We can compute the expansion of the reflected distance functions as before, using the Hamilton-Jacobi equations. Write the coordinates $ p_j = (x_j,y_j)  $ for $ j = 1,2 $, and for simplicity assume that $ \partial M_i $ is aligned with the y-axis, and $ \partial M _j $ is aligned with the x-axis. Then:
\begin{align*}
d^2_{R_i}(p_1,p_2) &=  g_{xx}(p_1) \; (x_1 + x_2)^2  + g_{yy}(p_1) \; (y_2-y_1)^2 + O(x_1^2) + O( | p_1 - p_2|^2 ) \\ 
d^2_{R_{ij}}(p_1,p_2) &=  g_{xx}(p_1) \; (x_1 + x_2)^2  + g_{yy}(p_1) \;(y_1+y_2)^2 + O(x_1^2 + y_1^2) + O( |p_1-p_2|^2)
\end{align*}
 so that:
\begin{align}\label{eq:image}
\log \left( d^2_{R_i}(p_1,p_2) \right)  &=  \log \left(g_x^{-1} (x_1 + x_2)^2  + g_y^{-1} (y_2-y_1)^2 \right)  + \cdots \\
\log \left( d^2_{R_{ij}}(p_1,p_2) \right) &= \log \left(  g_x^{-1} (x_1 + x_2)^2  + g_y^{-1} (y_1+y_2)^2 \right) + \cdots
\end{align}
where the remaining terms are continuous as $ p_1,p_2 \rightarrow \partial M $. 

\subsection{Combinatorial Green's Function}
Now we construct the parametrix $ G^i $  for the combinatorial Green's function. It is based on lattice regularized logarithm function, which is a special case of a general formula given in \cite{Kenyon2002}, arising in the theory of discrete complex analysis. It is defined as follows:
\begin{definition}
For $ a,b $ positive real numbers, define $ \L_{(a,b)}: \mathbb{Z}^2 \rightarrow \mathbb{R} $ for $ (x,y) \in \mathbb{Z}^2 $ as:
\begin{align} \label{eq:logdef}
\L_{(a,b)}(x,y) &= -\frac{\sqrt{a b}}{8 \pi^2 i } \int_C \frac{\log(z)}{z} \left(\frac{z+e^{ i w}}{z-e^{ i w}}\right)^{|x|+|y|}  \left(\frac{z+e^{-i w}}{z-e^{-i w}}\right)^{|x|-|y|} dz
\end{align}
where $ w = \tan^{-1} \left( \sqrt{b/a} \right) $ and where $ C $ is a curve that  encircles the points $ \pm e^{  iw } $ and $ \pm e^{- i w} $, but not the origin.
\end{definition}	
This function satisfies the following properties, the proofs of which can be found in \cite{Kenyon2002},\cite{Bucking}:
\begin{enumerate}[\hspace{10pt}1.]
\item $ \L_{(a,b)}(0,0) = 0 $
\item $ \left(\frac{1}{a} D_x^2 + \frac{1}{b} D_y^2 \right) \L_{(a,b)}(x,y) = \delta_{0x} \delta_{0y} $
\item $ \L_{(a,b)} (x,y) $ has the asymptotic expansion for $ x, y \rightarrow \infty $:
\begin{align*}
\L_{(a,b)}(x,y) & = -\frac{\sqrt{a b}}{4 \pi } \log \left( \frac{ 16 \gamma^2}{ a+b } \right)  - \frac{\sqrt{a b} }{4 \pi} \log\left(a x^2  + b y^2 \right) \\ & - \frac{\sqrt{a b}}{24 \pi} \frac{a b \left( b x^4 - (a + b ) x^2 y^2 + a y^4 \right)}{(b\;x^2 + a\;y^2)^3} + O\left(\frac{1}{(x^2 + y^2)^2} \right) 
\end{align*}
where $ \gamma_l = \exp( \gamma_e) $ is the exponential of the Euler gamma $ \gamma_e $.
\item 
The finite difference derivatives of $ \L_{(a,b)} $ have the asymptotic expansion:
\begin{equation} \label{eq:logasymp}
\begin{aligned}
D_x \L_{(a,b)}(x,y) &= \frac{\sqrt{a b}}{2 \pi} \frac{a x}{a x^2 + b y^2  } + O \left( \frac{1}{(x^2 + y^2)^{3/2}} \right) \\
D_x^2 \L_{(a,b)}(x,y) &= \frac{\sqrt{a b}}{2 \pi} \frac{a b \; y^2 - a^2 x^2}{ (a x^2 + b y^2 )^2 } + O \left( \frac{1}{(x^2 + y^2)^{2}} \right) \\
D_y \L_{(a,b)}(x,y) &= \frac{\sqrt{ a b}}{2 \pi} \frac{b y }{a x^2 + b y^2 } + O \left( \frac{1}{(x^2 + y^2)^{3/2}} \right) \\
D_y^2 \L_{(a,b)}(x,y) &= \frac{\sqrt{ a b}}{2 \pi} \frac{a b \; x^2  - b^2 y^2 }{(a x^2 + b y^2)^2} + O \left( \frac{1}{(x^2 + y^2)^{2}} \right)
\end{aligned}
\end{equation}
(Here it is understood that in the finite difference operators $ \epsilon = 1 $.)
\item At the origin, the derivative is:
\begin{align*}
D_x \L_{(a,b)}(0,0) &= 0 \\
D_x^2  \L_{(a,b)}(0,0) &= \frac{2 a}{  \pi} \tan^{-1} \left( \sqrt{\frac{b}{a}} \right) \\
D_y  \L_{(a,b)}(0,0)  &= 0 \\
D_y^2  \L_{(a,b)}(0,0) &=   \frac{2 b}{ \pi} \tan^{-1} \left( \sqrt{\frac{a}{b}} \right)
\end{align*}
\end{enumerate}

\subsubsection{Green's function on $ X^i $ }
We now utilize the lattice logarithm defined above to construct the Green's function on $ X^i $. Roughly speaking, we replace all logarithmic divergences identified in \S \ref{subsec:smooth} with corresponding lattice logarithm functions. For some notational simplicity, we fix:
\begin{align}
\begin{split}
L(p_1,p_2) &= -\frac{\sqrt{g(p_1) }}{4 \pi} \log \left( g_{xx}(p_1) x^2 + g_{yy}(p_2) y^2 \right) \\
\L(p_1,p_2) &= \L_{g(p_1)}\left( \frac{x}{\epsilon}, \frac{y}{\epsilon} \right)
\end{split}
\end{align}
where $ (x,y) = p_2 - p_1 $. We define also $ L_{R_i}, L_{R_{ij}}$ and $ \L_{R_i}, \L_{R_{ij}} $ in the obvious way based on the expansions (\ref{eq:image}).

Then, define $ G^i: X^i_0 \times X^i_0 \rightarrow \mathbb{R} $ as:
\begin{enumerate}
\item For $ p \in C_{ij} $:
\begin{align*}
G^i(p_1,p_2) =& \; G(p_1,p_2) - L(p_1,p_2)  + \L (p_1,p_2) 
\\ & - L_{R_{i}} +\L_{R_{i}}(p_1,p_2) 
\\ & - L_{R_{j}} +\L_{R_{j}}(p_1,p_2) 
\\ & + L_{R_{ij}} -\L_{R_{ij}}(p_1,p_2) 
\end{align*}
\item For $ p \in B_i $:
\begin{align*}
G^i(p_1,p_2) =&\; G(p_1,p_2) - L(p_1,p_2)  + \L (p_1,p_2) 
\\ & - L_{R_{i}} +\L_{R_{i}}(p_1,p_2) 
\end{align*}
\item For $ p \not\in B $:
\begin{align*}
G^i(p_1,p_2) =& \; G(p_1,p_2) - L(p_1,p_2)  + \L (p_1,p_2) 
\end{align*}
\end{enumerate}
Here, when $ p_1 = p_2 $ we set $ I_{X^i}( G(p_1,p_1) + L(p_1,p_1) ) = 0 $.

We now estimate the error $ R^i = \Delta^i G^i - I $. The series converges if in the operator norm $ | R^i |_{op} < 1 $.  We show the stronger condition that each column of $ R^i $ has $ L_1 $ norm of $ O(\epsilon) $. It follows then that for sufficiently large $ i $, $ | R^i |_{op} < 1 $. 
\begin{prop}
Let $ R: X^i_0 \times X^i_0 \rightarrow \mathbb{R} $ be defined as:
\begin{align*}
R^i(p_1,p_2) = \Delta^i G^i(p_1,p_2) - \delta_{p \cdot}
\end{align*}
Then there is a constant $ C $ such that:
\begin{align*}
\sum_{p_2 \in X^i_0}  | R^i(p_1,p_2) | \leq C h 
\end{align*}
\end{prop}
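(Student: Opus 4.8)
The plan is a parametrix estimate carried out region by region; write $h=\epsilon_i$ for the mesh. First I would split $\sum_{p_2}|R^i(p_1,p_2)|$ according to the position of $(p_1,p_2)$ relative to the diagonal and to $\partial M$. Since $G^i$ is defined by the three cases (1)--(3), it suffices to treat $p_1$ deep in the interior $M\setminus B$, $p_1$ in a single boundary collar, and $p_1$ in a corner region $C_{ij}$; moreover the reflected building blocks $L_{R_i},\L_{R_i},L_{R_{ij}},\L_{R_{ij}}$ obey the same local estimates as $L$ and $\L$, which follows from the expansions (\ref{eq:image}) together with properties (1)--(5) of $\L$ since the images are smooth deformations of the base point. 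So after setting up notation I would write out only the interior case. Fixing $p_1$ in the interior, I would split once more into a near sum $\sum_{|p_1-p_2|<\rho}$ and a far sum $\sum_{|p_1-p_2|\ge\rho}$, with $\rho=\rho(h)\to0$ a cutoff (a small power of $h$) to be optimized at the end.

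On the far part $G^i(\cdot,p_2)=(G-L)(\cdot,p_2)+\L(\cdot,p_2)$ is smooth on the scale of the mesh, and I would feed the expansion (\ref{eq:lapasym}) of $\Delta^i$ term by term into the consistency estimates (\ref{eq:clap}) and (\ref{eq:cfindif}). The cancellation rests on two facts. First, $G$ is the Green's function, so $\Delta G\equiv0$ off the diagonal; since $L(p_1,\cdot)$ is annihilated by the Laplace--Beltrami operator with coefficients frozen at $p_1$, this gives $\Delta(G-L)=-(\Delta-\Delta^{\mathrm{fr}}_{p_1})L$, which is only $O(|p_1-p_2|^{-1})$ near the diagonal. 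Second, by property (2) the \emph{discrete} constant-coefficient operator obtained by freezing the metric at $p_1$ applied to $\L(p_1,\cdot)$ equals $\delta_{p_1p_2}$ exactly, so the residual of $\Delta^i\L$ is the difference between $\Delta^i$ and that frozen operator applied to $\L$; using property (3) — the $\log h$ and constant shifts die under the finite differences — this difference reproduces $-(\Delta-\Delta^{\mathrm{fr}}_{p_1})L$ to leading order, the error being controlled by the subleading ($\e$- and higher-order) terms of (\ref{eq:lapasym}), by the rational correction and remainder of property (3), and by the base-point dependence of the parameter $g(p_1)$ inside $\L=\L_{g(p_1)}(\,\cdot/h\,)$. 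Subtracting, the two $O(|p_1-p_2|^{-1})$ contributions cancel and what remains carries extra powers of $h$ and decays in $|p_1-p_2|$; a dyadic sum over annuli $|p_1-p_2|\sim t$, each holding $\sim(t/h)^2$ lattice points, then contributes $O(h)$, the delicate point being the inner radius $t\sim\rho$.

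On the near part $|p_1-p_2|<\rho$ the correction $\omega$ and the remainder $G-L-\omega$ are only H\"older across the diagonal, so I would estimate $\Delta^iG^i$ directly from the lattice data: property (2) cancels the singular term exactly, properties (1) and (5) and the leading terms of (\ref{eq:logasymp}) make $\L$ and its finite differences $O(1)$ at and near the origin, the metric varies by $O(\rho)$ over the ball — affecting the coefficients of $\Delta^i$ and, through $\partial_a\L$ and $\partial_b\L$ (which are $O(\log(1/h))$), the base-point parameter of $\L$ — and $G-L$ is continuous with a logarithmically bounded gradient. Multiplying the resulting pointwise bound by the $\sim(\rho/h)^2$ lattice points in the ball and choosing $\rho$ to balance the two sums yields $\sum_{p_2}|R^i(p_1,p_2)|\le Ch$, uniformly in $p_1$ and $i$; in particular $\|R^i\|_{\mathrm{op}}\to0$ and the Neumann series (\ref{eq:series}) converges.

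I expect the near-diagonal regime to be the main obstacle, since there the continuum Green's function is unavailable and one must play the fine asymptotics of the lattice logarithm (properties (2)--(5), from \cite{Kenyon2002} and \cite{Bucking}) against the expansion (\ref{eq:lapasym}) of $\Delta^i$, keeping track of every error source — metric variation in the coefficients, the subleading $\e$-order terms of (\ref{eq:lapasym}), the base-point dependence of $g(p_1)$ in $\L$, and the H\"older smooth part — and verifying that the $O(|p_1-p_2|^{-1})$-sized residuals genuinely cancel between the lattice-logarithm and the smooth parts rather than merely being individually bounded. The secondary difficulty is the boundary bookkeeping: one must check that replacing the continuum reflected distances $d_{R_i},d_{R_{ij}}$ by their lattice-logarithm analogues makes $G^i$ respect the relative (Dirichlet) condition on $L$ and the absolute (Neumann) condition on $\partial X^i\setminus L$ to the same order as in the interior, which needs (\ref{eq:image}) to hold uniformly as $p_1,p_2\to\partial M$.
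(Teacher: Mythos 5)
Your far-field argument is exactly the mechanism the paper uses: off the diagonal one inserts $\e^2\Delta(G-L)=-\e^2\Delta L$ and $\e^2\Delta L$ as intermediate quantities, so that the only surviving contributions are the two consistency errors --- finite differences of $G-L$ versus its continuum derivatives, and $\Delta^i\L$ versus the frozen constant-coefficient lattice operator that annihilates $\L$ exactly by property (2) --- and the boundary and corner collars are handled by the same estimates on the image terms, as you say. The gap is in your near region. The individual terms of $R^i$ do \emph{not} become small near the diagonal: the paper's pointwise bound is $|R^i(p_1,p_2)|=O(\e^4/|p_2-p_1|^3)$, which at the nearest neighbours is already of size $\Theta(\e)$, and the total $O(\e)$ comes entirely from the $|p_2-p_1|^{-3}$ decay, not from pointwise smallness. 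If on $|p_1-p_2|<\rho$ you discard the decay in favour of a uniform pointwise bound $M$ and multiply by the $\sim(\rho/h)^2$ lattice points, you need $M\le C h^3/\rho^2$; since $M$ can be no better than the nearest-neighbour value $\Theta(h)$, this forces $\rho\lesssim h$, i.e.\ an empty near region, and no choice of $\rho$ balances the two sums. Your stated near-field inputs ($\L$ and its differences $O(1)$ near the origin, metric variation $O(\rho)$ over the ball, $G-L$ merely Lipschitz across the diagonal) give at best $M=O(h)$ and hence a near sum of size $O(\rho^2/h)$, which is not $O(h)$ for any $\rho\gg h$.

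The repair --- and what the paper actually does --- is to observe that no near/far splitting is needed. From the Hamilton--Jacobi expansion (\ref{eq:gdist}), $G-L$ equals a smooth function plus $-\tfrac{\sqrt{g}}{4\pi}\log\bigl(1+O(|p_2-p_1|)\bigr)$, whose third and fourth partials blow up only like $|p_2-p_1|^{-2}$ and $|p_2-p_1|^{-3}$; hence the fourth-order finite-difference consistency estimate holds at every off-diagonal lattice point and, matched against the expansions (\ref{eq:logasymp}) of the lattice logarithm to the same order, yields the single bound $O(\e^4/|p_2-p_1|^3)$ uniformly down to the nearest neighbours, which sums over $\mathbb{Z}^2\setminus\{0\}$ to $O(\e)$. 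The only genuinely exceptional point is $p_2=p_1$ itself, where one computes $\Delta^i\L(p_1,p_1)=1+O(\e^2)$ from property (5) and checks $\Delta^i(G-L)(p_1,p_1)=O(\e^2)$ directly. You should therefore carry your far-field cancellation all the way to the diagonal by quantifying the blow-up of the derivatives of $G-L$, rather than retreating to H\"older regularity inside a mesoscopic ball; the worry about $\partial_a\L,\partial_b\L=O(\log(1/h))$ also disappears if you let $\Delta^i$ act on the second argument while the parameter of $\L$ stays frozen at $p_1$, so that property (2) applies exactly and the only base-point error is the $O(|p_2-p_1|)$ variation of the coefficients of $\Delta^i$.
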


\begin{proof}
We assume that $ p_1 \not \in B $, and we split the sum
\begin{align}
\sum_{p_2 \in X^i_0}  | R^i(p_1,p_2)  | = R^i(p_1,p_1) +  \sum_{\substack{p_2 \in X^i_0 \\ p_2 \neq p_1}} | R^i(p_1,p_2) | \label{eq:p1t0}
\end{align}
When $ p_1 \neq p_2 $, we write $ R^i(p_1,p_2) $ using the fact that $ \Delta G(p_1,p_2)  = 0 $ as follows:
\begin{align}
| R^i  | &=  | \Delta^i G- \Delta^i L  + \Delta^i \L | \notag
\\ &=  | \Delta^i G - \Delta^i L- \e^2 (\Delta G - \Delta L)  + \Delta^i \L - \e^2 \Delta L| \notag
\\ &\leq  \big| \Delta^i G - \Delta^i L- \e^2 (\Delta G - \Delta L) \big|  + \big| \Delta^i \L - \e^2 \Delta L \big| \label{eq:p1t1}
\end{align}
We estimate the two terms seperately using the aymptotic expansion of ( \ref{eq:lapasym}) of $ \Delta^i$, along with the consistency properties of the finite difference operators for the first term, and the asymptotic formulas  (\ref{eq:logasymp}) of $ \L $  for the second term.

We begin with the first term: $ | \Delta^i G - \Delta^i L- \e^2 (\Delta G - \Delta L)| $. Recall that the errors of the first and second order finite difference operators are bounded by the third and fourth partial derivatives of $ G(p_1,p_2) $ and $ L(p_1,p_2) $, which diverge near the diagonal. However, we can estimate these divergent terms using the expansion of the distance function (\ref{eq:dist}). Letting $ (x,y) = p_2 - p_1 $, as $ (x,y)\rightarrow 0 $ we have:
\begin{equation} \label{eq:gdist}
\begin{aligned} 
G(p_1,p_2) - & L(p_1,p_2) \\ &= -\frac{\sqrt{g(p_1)}}{4 \pi} \log\left( 1 + \frac{ \frac{1}{2} \partial_x  g_{xx} \; x^3 +   \frac{1}{2} \partial_y  g_{xx} \; x^2 y+  \frac{1}{2} \partial_x  g_{yy} \; x y^2+ \frac{1}{2} \partial_y  g_{yy} \; y^3\cdots }{g_{xx} x^2 + g_{yy} y^2} \right)
\end{aligned}
\end{equation}
(we ignore $ \omega(p_1,p_2) $ since it is smooth). From this, we can calculate explicitly that the third and fourth partial derivatives behave as $ p_2 \rightarrow p_1 $ as:
\begin{align*}
\partial_x^m \partial_y^{3-m} G(p_1,p_2) -  \partial_x^m \partial_y^{3-m} L(p_1,p_2) =  O \left(\frac{1}{|p_2 - p_1|^2} \right)\\
\partial_x^m \partial_y^{4-m} G(p_1,p_2) -  \partial_x^m \partial_y^{4-m} L(p_1,p_2) =  O \left(\frac{1}{|p_2 - p_1|^3} \right)
\end{align*}
Here the constant of proportionalities in the asymptotics can be written in terms of the metric $ g $ and its partial derivatives, which we omit writing for brevity. Then we have:
\begin{align*}
\big| D_x^2 (G - L)  -  \e^2 \partial_x^2 (G-L) \big| = \frac{O(\e^4)}{ |p_2- p_1|^3 } \\
\big| \e D_x (G - L)  -  \e^2 \partial_x (G-L) \big| = \frac{O(\e^4)}{ |p_2- p_1|^2 }
\end{align*}
and similar expressions for the $ y $-direction. It follows from this and equation (\ref{eq:lapasym}) that:
\begin{align} \label{eq:p1t2}
| \Delta^i G (p_1,p_2)- \Delta^i L(p_1,p_2)- \e^2 (\Delta G(p_1,p_2) - \Delta L(p_1,p_2))| = \frac{O(\e^4)}{|p_2 - p_1|^3}
\end{align}

We turn now to $ | \Delta^i \L - \e^2 \Delta L|$. First observe that the equations (\ref{eq:logasymp}) can be rewritten as:
\begin{align*}
D_x^2 \L(p_1,p_2) = \e^2 \partial_x^2 L(p_1,p_2) + \frac{O(\e^4)}{x^4} \\
\e D_x\L(p_1,p_2) = \e^3 \partial_x L(p_1,p_2) + \frac{O(\e^4)}{x^3}
\end{align*}
and similar expressions for the $ y $-direction.  In addition, recall that for $ p_1 \neq p_2 $, we have  $ \left( \frac{D_x^2}{g_{xx}(p_1)} + \frac{D_y^2}{g_{yy}(p_1)} D_y^2 \right) \L(p_1,p_2) = 0 $ and $ \left( \frac{\partial_x^2}{g_{xx}(p_1)}  + \frac{\partial_y^2}{g_{yy}(p_1)} \right) L(p_1,p_2) = 0 $ Thus:
\begin{align*}
| \Delta^i \L - \e^2 \Delta L | = \big|\left( \Delta^i - \frac{D_x^2}{g_{xx}(p_1)} - \frac{D_y^2}{g_{yy}(p_1)} \right) \L - e^2 \left( \Delta -  \frac{\partial_x^2}{g_{xx}(p_1)}  - \frac{\partial_y^2}{g_{yy}(p_1)} \right) L \big| 
\end{align*}
Using again the expression  (\ref{eq:lapasym}) for $ \Delta^i $, it follows that:
\begin{align} \label{eq:p1t3}
 | \Delta^i \L(p_1,p_2) - \e^2 \Delta L(p_1,p_2) | = \frac{O(\e^4)}{ |p_2 - p_1 |^3}
\end{align}
Putting together (\ref{eq:p1t3}) and (\ref{eq:p1t2}), we find for $ p_1 \neq p_2 $:
\begin{align}
|R^i(p_1,p_2) | = \frac{O(\e^4)}{ |p_2 - p_1|^3} 
\end{align}
We can estimate the sum by changing variables $ p = (p_2 - p_1)  / \e $. Then:
\begin{align}
\sum_{\substack{p_2 \in X^i_0 \\ p_2 \neq p_1}} | R^i(p_1,p_2) | \leq O(\e) \sum_{p \in \mathbb{Z}^2 \backslash 0} \frac{1}{| p |^3} &\leq O(\e)\left( \int_1^\infty \frac{2 \pi |p| }{|p|^3} d |p| + 1\right) \notag \\ 
&= O(\e) \label{eq:p1t4}
\end{align}

Lastly, we have the case $ p_1 = p_2 $. We compute explicitly:
\begin{equation} \label{eq:p1t5}
\begin{aligned} 
\Delta^i \L(p_1,p_1) &= 1 + O(\e^2) \\
\Delta^i G(p_1,p_1)  - \Delta^i(p_1,p_1) &= O(\e^2)
\end{aligned}
\end{equation}

Putting together equations (\ref{eq:p1t3}) and (\ref{eq:p1t4}) completes the calculation. The case where $ p \in B $ is identical except for the additional "image" charges that can be handled in precisely the same way. 

\end{proof}

\section{Asymptotic Determinant of Laplacian} \label{sec:sec5}
In this section, we derive an expression (\ref{eq:vareq}) for the variation of the determinant of the combinatorial laplacian. It is more convenient to write variations of $ g $ in terms of scaling variations and . Letting $ \lambda = \sqrt{ g_{xx} g_{yy} } $ and $ \omega = \sqrt{ g_{yy} / g_{xx}} $, we rewrite (\ref{eq:lapasym}) as:
\begin{align*}
\Delta_L^i = \frac{1}{\lambda} \left( \omega \; D_x^2 + \omega^{-1} \; D_y^2  + \e \partial_x \omega \; D_x + \e \partial_y \omega^{-1} \; D_y \right)  \\ +  \; (O(\e^2) D_x^2 + O(\e^2) D_y^2 + O(\e^3) D_x + O(\e^3) D_y)
\end{align*}
Then we rewrite (\ref{eq:varf}) as:
\begin{align} \label{eq:trace}
 \text{Tr}\left( (\Delta^i_L)^{-1} \delta \Delta_L^i\right) = & \sum_{p \in X \backslash \partial X} -\frac{\delta\lambda}{\lambda} 
\Delta_L^i (\Delta^i_L)^{-1}(p,p) \notag \\ + & \frac{1}{\lambda} \sum_{p \in X \backslash \partial X}  \left( \delta \omega \; D_x^2 - \frac{ \delta \omega}{\omega^2} \; D_y^2 + \e \partial_x \delta \omega \; D_x -  \e \frac{\partial_y \delta \omega}{\omega^2} D_y \right) (\Delta^i_L)^{-1}(p,p) \\
+ & \sum_{p \in X \backslash \partial X}  \left(O(\epsilon^2) D_x^2 + O(\epsilon^2) D_y^2 + O(\e^3) D_x + O(\e^3) D_y \right) (\Delta^i_L)^{-1}(p,p) \notag
\end{align}
The finite difference derivatives of the Green's function $ G(p,p) $ at the diagonal can be estimated using the results of the last section. The key is that all relevant terms are given by $ \L $:
\begin{prop} \label{prop:partials} 
Asymptotics of the partial derivatives of the Green's functions. For $ p \in X_0 $, let $ r_{p,\partial M} $ be the Euclidean distance in $ \mathbb{R}^2 $ of $ p $ to $ \partial M $. Then:
\begin{align*}
D_x^2 (\Delta^i)^{-1} (p,p) &=  D_x^2 \L(p,p) +\frac{O(\epsilon^4)}{d^3} \\
D_x (\Delta^i)^{-1}) (p,p) &= O(\epsilon)
\end{align*}
and similar expressions for the $ y $-direction. When $ p$ is near the boundary, $ d_{p, \partial M} \leq \delta_{\partial M } $, then we have a different estimate including the image charges. Let $ p \in B_i $. Then:
\begin{align*}
D_x^2 (\Delta^i)^{-1} (p,p) &=  D_x^2 \L_(p,p) - D_x^2 \L_{R_i}( p, p) + O(\epsilon^2) \\
D_x(\Delta^i)^{-1} (p,p) &=  D_x \L_{R_i}( p, p) + O(\epsilon)
\end{align*}
Finally when $ p \in C_{ij} $, we have:
\begin{align*}
D_x^2(\Delta^i)^{-1}(p,p) &=   D_x^2 \L_(p,p) - D_x^2 \L_{R_i}( p, p) - D_x^2 \L_{R_j}( p,p)+ D_x^2 \L_{R_{ij}}(p,p)+ O(\epsilon^2) \\
D_x (\Delta^i)^{-1}(p,p) &= -D_x \L_{R_i}( p, p) - D_x \L_{R_j}( p,p)+ D_x \L_{R_{ij}}(p,p)+ O(\epsilon)
\end{align*}
and similar expressions for the $y$-direction.
\end{prop}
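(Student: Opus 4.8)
The plan is to combine the parametrix expansion \eqref{eq:series} with the pointwise estimates already established in Section \ref{sec:sec4}. Recall that $(\Delta^i_L)^{-1} = G^i(I + R + R^2 + \cdots)$ where each column of $R$ has $L_1$ norm $O(\epsilon)$; since each of the finite difference operators $D_x^2, D_x, D_y^2, D_y$ applied to a function is a bounded linear combination of nearby values, it suffices to control $D_x^2 G^i(p,p)$ etc.\ and then show that the contributions of $G^i R, G^i R^2, \ldots$ are of the claimed lower order. So the first step is to write, for $p \notin B$,
\begin{align*}
(\Delta^i_L)^{-1}(p_1,p_2) = G^i(p_1,p_2) + \sum_{q} G^i(p_1,q) R(q,p_2) + \cdots,
\end{align*}
and similarly near the boundary, and to apply $D_x^2$ (or $D_x$) in the first variable, setting $p_1 = p_2 = p$ at the end.

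The second step handles the leading term $D_x^2 G^i(p,p)$. By the definition of $G^i$ in Section \ref{sec:sec4}, for $p \notin B$ one has $G^i(p_1,p_2) = G(p_1,p_2) - L(p_1,p_2) + \L(p_1,p_2)$, with the convention $I_{X^i}(G(p_1,p_1) + L(p_1,p_1)) = 0$. The function $G - L = -\frac{\sqrt{g}}{4\pi}\log(1 + O(|p_1-p_2|^3)/(g_{xx}x^2 + g_{yy}y^2)) + \omega$ from \eqref{eq:gdist} is smooth enough near the diagonal that its second finite difference at $p_1 = p_2$ is $O(\epsilon^2)$ — more precisely, $\epsilon^2$ times a bounded quantity, which when divided through by the overall $\epsilon^2$ scaling leaves $O(\epsilon^2)$ relative to the $O(1)$ size of $D_x^2\L$; one reads off the correct power of $\epsilon$ from the fact that $D_x^2(G-L)$ only sees the third and higher derivatives, which are $O(1/d)$-type near the boundary but $O(1)$ in the bulk. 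So $D_x^2 G^i(p,p) = D_x^2\L(p,p) + O(\epsilon^2)$ in the bulk, and the error improves to the stated $O(\epsilon^4)/d^3$ once one keeps track of the distance to the boundary through $\omega$. For the first-order operator, $D_x\L(0,0) = 0$ by property 5 of the lattice logarithm, and $D_x(G-L)(p,p)$ is again $O(\epsilon)$, giving $D_x(\Delta^i)^{-1}(p,p) = O(\epsilon)$. Near the boundary, in $B_i$ resp.\ $C_{ij}$, the extra image terms $-L_{R_i}+\L_{R_i}$ (resp.\ also $-L_{R_j}+\L_{R_j}+L_{R_{ij}}-\L_{R_{ij}}$) are treated identically: the smooth differences $L_{R_i} - (\text{smooth remainder})$ contribute $O(\epsilon^2)$ and the lattice pieces $\L_{R_i}$ survive, which is exactly the asserted form.

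The third step bounds the correction series. For the $G^i R$ term one writes $D_x^2\big(\sum_q G^i(p,q)R(q,p)\big)$; because $D_x^2$ acts on the finitely many lattice points adjacent to $p$, this is a bounded combination of $\sum_q G^i(p',q)R(q,p)$ for $p'$ near $p$. Now $|G^i(p',q)|$ is $O(\log)$ near $q = p'$ and $O(1)$ away from it (it is the regularized Green's function), while $\sum_q |R(q,p)| = O(\epsilon)$; so the sum is $O(\epsilon \log(1/\epsilon))$, which is $O(\epsilon)$ up to a logarithm — in fact one does slightly better by noting $R(q,p) = O(\epsilon^4)/|q-p|^3$ away from the diagonal, so the $\log$ is absorbed, and the full Neumann series $\sum_{k\geq 1} G^i R^k$ contributes $O(\epsilon^2)$ (resp.\ $O(\epsilon)$ after one factor of $D_x$ is already accounted) to $(\Delta^i)^{-1}(p,p)$ and its finite differences. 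Combining, the correction series is dominated by the $O(\epsilon^2)$ (bulk: $O(\epsilon^4)/d^3$) error already present in the $G^i$ term, so it does not change the stated estimates.

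The main obstacle I expect is the bookkeeping of the boundary behaviour: the constants in the asymptotics of $G - L$, of the reflected pieces $L_{R_i}, L_{R_{ij}}$, and of their lattice analogues $\L_{R_i}, \L_{R_{ij}}$ all blow up like negative powers of $d_{p,\partial M}$, and one must verify that after the cancellations built into the definition of $G^i$ (mirroring the cancellations in the smooth decomposition of Section \ref{subsec:smooth}) the surviving error genuinely has the form $O(\epsilon^4)/d^3$ in the bulk and $O(\epsilon^2)$ in $B_i$ and $C_{ij}$, uniformly. Making the uniformity precise — i.e.\ that the implied constants depend only on $g$ and finitely many of its derivatives on $M$, and not on $i$ — is where the Schauder estimate on $\omega$ and the Hamilton–Jacobi expansions \eqref{eq:dist}, \eqref{eq:image} do the real work; everything else is the routine substitution of properties 2--5 of $\L_{(a,b)}$.
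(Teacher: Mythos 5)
Your overall strategy coincides with the paper's: expand $(\Delta^i_L)^{-1}$ in the Neumann series $G^i(I+R+R^2+\cdots)$, read off the leading behaviour of $D_x^2 G^i(p,p)$ and $D_x G^i(p,p)$ from the decomposition $G^i = G - L + \L$ (plus image terms near $\partial M$) together with properties 2--5 of $\L_{(a,b)}$, and then argue that the correction series is of lower order. The first two steps are essentially the paper's argument. The gap is in your third step.

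Your bound on $D_x^2(G^iR)(p,p)$ by absolute values does not reach the claimed precision. Writing $D_x^2(G^iR)(p,p)=\sum_q D_x^2G^i(p,q)R(q,p)$ and using the best pointwise bounds available, namely $|D_x^2G^i(p,q)|=O(\epsilon^2)/|p-q|^2$ and $|R(q,p)|=O(\epsilon^4)/|q-p|^3$, the sum over the lattice (in units $u=(q-p)/\epsilon$) is $O(\epsilon)\sum_{u\neq 0}|u|^{-5}=O(\epsilon)$, dominated by the terms within a few lattice spacings of $p$ where both factors are of order one and order $\epsilon$ respectively. This is consistent with your own intermediate estimate ``$O(\epsilon\log(1/\epsilon))$, log absorbed,'' but the jump from there to ``the full Neumann series contributes $O(\epsilon^2)$'' is asserted, not proved, and no triangle-inequality argument can supply the missing factor of $\epsilon$. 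Since the proposition is used in \eqref{eq:trace} by summing over $O(1/\epsilon^2)$ vertices, an $O(\epsilon)$ error in $D_x^2(\Delta^i)^{-1}(p,p)$ would contribute $O(1/\epsilon)$ to the trace and contaminate the boundary term, so the loss is not cosmetic. The paper closes this gap with a parity argument that your proposal omits entirely: in the region $0<|p_2-p_1|\leq r$ it refines the estimates \eqref{eq:p2t1}--\eqref{eq:p2t3} to show that the leading, $O(\epsilon^4)/|p_2-p_1|^3$ part of $R^i(p_1,p_2)$ is an \emph{odd} function of $p_2-p_1$ (it comes from the cubic terms $\tfrac12\partial_xg_{xx}x^3+\cdots$ in the Hamilton--Jacobi expansion \eqref{eq:dist}), while the leading part of $D_x^2G^i(p_1,p_2)$ is \emph{even}; the products therefore cancel pairwise when summed over the symmetric ball, leaving only subleading contributions of order $O(\epsilon^5)$, with the far region $|p_2-p_1|>r$ handled by the crude bound you describe. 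Without this even--odd cancellation (or some substitute for it), your proof of the $D_x^2$ estimates does not go through; the $D_x$ estimates, which only claim $O(\epsilon)$, are unaffected.
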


\begin{proof}
Recall that from (\ref{eq:series}) that:
\begin{align*}
(\Delta^i)^{-1} = G^i + G^i R^i + G^i R^i R^i  \cdots 
\end{align*}
We estimate $ D_x^2 (\Delta^i)^{-1}(p,p) $ and $ D_x (\Delta^i)^{-1}(p,p) $ by first calculating $ D_x^2 G^i(p,p) $ and $ D_x G(p,p) $. Then we show that the higher order terms only contribute at $ O(\e^2) $.

As before, we assume that $ p \not \in B $. Recall that $ G^i(p_1,p_2) = G(p_1,p_2) - \L(p_1,p_2) + \L(p_1,p_2) $. By direct computation using the expansion (\ref{eq:gdist}) for $ G- L $, we find:
\begin{align*}
 D_x^2 G(p,p) - D_x^2 L(p,p) + D_x^2 \L(p,p)  = D_x^2 \L(p,p) + O(\e^2) 
D_x G(p,p) - D_x L(p,p) + D_x \L(p,p) =  O(\e)
\end{align*} 
and similar expressions for the $ y $-direction.

We turn now to the second order term:
\begin{align*}
 D_x^2 G^i R^i (p_1,p_1) = \sum_{p_2 \in X^i_0} D_x^2 G^i(p_1,p_2) R(p_2,p_1) 
\end{align*}
Fix $ r $ to be the (euclidean) distance to the boundary, ie $ r =  \inf \{ |p - p_2 | \; \big| \; p_2 \in \partial M \} $. We break the sum:
\begin{align}
\sum_{p_2 \in X^i_0} D_x^2 G^i(p_1,p_2) R(p_2,p_1)  &= D_x^2 G^i(p_1,p_1) R^i(p_1,p_1) \label{eq:cent} \\ & + \sum_{0 < |p_2 - p_1| \leq r }  D_x^2 G^i(p_2,p_1) \; R^i(p_1,p_2) \label{eq:ins} \\ &+\sum_{ |p_2 - p_1| > r } D_x^2 G(p_2,p_1) R^i(p_1,p_2)  \label{eq:out}
\end{align}
and estimate each term seperately. 

We begin with the region  $ | p_2 - p_1 | > r $. We can estimate $ R^i(p_1,p_2) $ as before with equation (\ref{eq:p1t3}):
\begin{align*}
| R^i(p_1,p_2) | = \frac{O\left(\epsilon^4\right) \;}{ | p_2 - p_1 |^3 } 
\end{align*}
To estimate $ D_x^2 G^i(p_2,p_1) $, we use (\ref{eq:p1t2}) and (\ref{eq:logasymp}) to find:
\begin{align*}
D_x^2 G^i(p_2,p_1) = \frac{ O(\epsilon^2) }{ |p_2 - p_1|^2} 
\end{align*}
Thus we find:
\begin{align} \label{eq:f1}
\sum_{ |p_2 - p_1| > r } D_x^2 G(p_2,p_1) R^i(p_1,p_2) &= \sum_{ |p - p_1| > r } \frac{O(\epsilon^6)}{ | p_2 - p_1|^5 } \notag \\
&= \frac{O(\epsilon^4)}{r^3}
\end{align}

Next, we turn to the region $ 0 < | p_2 - p_1| < r $. This region requires more work because there is considerable cancellation. We first find a more refined estimate for $ R^i(p_1,p_2) $ and show that largest terms of $ O(\e) $ in $ R^i(p_1,p_2) $ are antisymmetric about $ p_1 $ (ie. odd as a function of $(x,y) = p_2 - p_1 $ ). We show that $D_x^2 G^i(p_1,p_2) $, on the other hand is symmetric about $ p_1 $. Thus, the sum (\ref{eq:ins}) vanishes at $ O(\epsilon) $. 

As before, we write $ R^i(p_1,p_2) $ as in (\ref{eq:p1t1}). Considering the term:
\begin{align} \label{eq:b1}
| \Delta^i G(p_1,p_2) + \Delta^i L(p_1,p_2) - \Delta G(p_1,p_2) - \Delta L(p_1,p_2) | 
\end{align}
we can use the expansion (\ref{eq:p1t1}) and expand for small $ (x,y) $:
\begin{align*}
 1 + \frac{ \frac{1}{2} \partial_x  g_{xx} \; x^3 +   \frac{1}{2} \partial_y  g_{xx} \; x^2 y+  \frac{1}{2} \partial_x  g_{yy} \; x y^2+ \frac{1}{2} \partial_y  g_{yy} \; y^3 }{g_{xx} x^2 + g_{yy} y^2}  + \cdots
\end{align*}
The first term is clearly odd in $ (x,y) $, and the fourth partial derivatives of the higher order terms are only $ O(1 / (p_2-p_1)^2) $ as $ p_2 \rightarrow p_1 $. Consequently, using (\ref{eq:lapasym}) again, we write:
\begin{align} \label{eq:p2t1}
| \Delta^i G(p_1,p_2) + \Delta^i L(p_1,p_2) - \Delta G(p_1,p_2) - \Delta L(p_1,p_2) | = O(\epsilon^4) f(p_1,p_2) + \frac{ O(\epsilon^4) }{ |p_2 - p_1|^2}
\end{align}
for a function $ f $ that is anti symmetric with about $ p_1 $ and $ f(p_1,p_2) = O(\frac{1}{|p_1 - p_2|^3 } $ as $ p_2 \rightarrow p_1 $.  Next, compute a similar estimate for $ \Delta^i \mathfrak{L} $. Since both $ \mathfrak{L}(p_1,p_2) $ and $ L(p_1,p_2) $ are even about $ p_1 $ it follows
\begin{align*}
D_x^2 \mathfrak{L}(p_1,p_2) - \partial_x^2 L(p_1,p_2) \\
D_y^2 \mathfrak{L}(p_1,p_2) - \partial_y^2 L(p_1,p_2) 
\end{align*}
are even. Since, as before we have $ g(p_1) D_x^2 + g(p_1) D_y^2 \L = 0 $ we find using (\ref{eq:lapasymp}) that
\begin{align} \label{eq:p2t2}
\Delta^i \mathfrak{L}(p_1,p_2) - \Delta L(p_1,p_2) = \epsilon^4 f_2(x,y) + \frac{O(\epsilon^2)}{ | p_2 - p_1 |}
\end{align}
for a function $ f_2 $ that is odd and $ f_2(x,y) = O(\frac{1}{ | p_1 - p_2| ^3 }) $. Putting together equations (\ref{eq:p2t1}) and (\ref{eq:p2t2}), we find:
\begin{align} \label{eq:p2t3}
R^i(p_1,p_2) = \e^4 f_1(p_1,p_2)+ \frac{O(\epsilon^4)}{|p_2 - p_1|^2}
\end{align}
where $ f_1(p_1,p_2) $ is an odd function, bounded by $ \frac{1}{ | p_1 - p_2 |^3} $. 

Next we estimate $ D_x^2 G^i(p_2,p_1) $. Assume $ p_2 \not\in B $.  First we have:
\begin{align*}
D_x^2 G(p_1,p_2) - D_x^2 L(p_1,p_2) = \frac{O(\e^2)}{ |p_1-p_2| }
\end{align*}
Turning to $ D_x^2 \mathfrak{L}(p_2,p_1) $, recall that:
\begin{align*}
D_x^2 \mathfrak{L}(p_2,p_1)  = D_x^2 \mathfrak{L}_{g(p_2)} (x/\e, y \e)
\end{align*}
We can expand out the dependence on the metric as:
\begin{align*}
 D_x^2 \mathfrak{L}_{g(p_2)} (x/\e, y / \e) = D_x^2 \mathfrak{L}_{g(p_1)} (p_2,p_1) + \frac{O(\e^2)}{|p_1-p_2|}
\end{align*}
Notice however that $ D_x^2 \L $ is an even function. If $ p_2 \in B $, there are additional image charges are bounded by $  \frac{O(\e^2)}{r^2} $. In either case, we can write:
\begin{align} \label{eq:p2t4}
D_x^2 G^i(p_1,p_2) = \e^2 f_2(p_1,p_2) + \frac{O(\e^2)}{|p_1-p_2|}
\end{align}
for a function $ f_3 $ even about $ p_1 $.

Putting the estimates (\ref{eq:p2t4}) and (\ref{eq:p2t3}) into (\ref{eq:ins}), we have finally:
\begin{align*}
\sum_{0 < |p_2 - p_1| \leq r }  D_x^2 G^i(p_2,p_1) = \sum_{0 < |p_2 - p_1| \leq r } \e^6 f_2(p_1,p_2) f_1(p_1,p_2) + \frac{O(\e^6)}{|p_2-p_1|} = O(\e^5)
\end{align*}
where the first term vanishes due to the even-odd.

The contribution of the higher order terms in (\ref{eq:series}) can be estimated using (\ref{eq:p1t3}) and we omit the details. 
\end{proof}

Returning to equation (\ref{eq:trace}), note first that there are $ O(\frac{1}{\epsilon^2}) $ vertices as $ \epsilon \rightarrow 0 $, so that any summand of $ O(\epsilon^2) $ in  will contribute at most a constant term in the asymptotic expansion, which we neglect. Recall that $ B$, $ C $. Thus using Proposition (\ref{prop:partials}), we drop irrelevant terms, and rearrange equation (\ref{eq:trace}) as:
\begin{align} \label{eq:trsum}
 \text{Tr}\left( (\Delta^i_L)^{-1} \delta \Delta_L^i\right) = \sum_{p \in X \backslash L } \text{[bulk]} + \sum_{p \in B}\text{[boundary]} + \sum_{p \in C} \text{[corner]} 
\end{align}
where the summands are:
\begin{align} 
\text{[bulk]} &=  -\frac{\delta\lambda}{\lambda} +  \frac{1}{\lambda} \left( \delta \omega \; D_x^2 - \frac{ \delta \omega}{\omega^2} \; D_y^2 \right) \L(p_1,p_1) \label{eq:sum1} \\
\text{[boundary]} &= \left( \frac{\delta \omega}{\lambda} D_x^2  - \frac{\delta \omega}{\lambda \omega^2}  D_y^2 + \e \partial_x \delta \omega D_x  - \e \delta  \left( \frac{\partial_y \omega}{ \omega^2 }\right) D_y \right) \L_{R_i}(p_1,p_2) \label{eq:sum2} \\
\text{[corner]} &= \left( \frac{\delta \omega}{\lambda} D_x^2  - \frac{\delta \omega}{\lambda \omega^2}  D_y^2 + \e \partial_x \delta \omega D_x  - \e \delta  \left( \frac{\partial_y \omega}{ \omega^2 }\right) D_y \right) \L_{R_ij}(p_1,p_2)  \label{eq:sum3}
\end{align}
we compute each of these sums in the following subsections. To compute these sums, we utilize the Euler-MacLaurin formula, which we write as:
\begin{prop} \label{prop:el}
Let $ f \in C^\infty \left([0,1]\right) $, and $ N \in \mathbb{N} $, then:
\begin{align*}
\sum_{i = 1}^{N-1} f( i / N) &= N \int_0^1 f(x) dx - \frac{1}{2}( f(0) + f(1) ) + O\left(\frac{1}{N}\right)
\end{align*}
Let $ S = [0,1] \times [0,1] $ be the unit square and $ g \in C^\infty(S) $. Then
\begin{align*}
\sum_{i = 1}^{N-1} \sum_{j = 1}^{M-1} g(i/N, j/N) &= N^2 \int_S g(x,y)\;  dx \; dy -  \frac{N}{2}\int_{\partial S} g(s) ds+ O(1) 
\end{align*}
\end{prop}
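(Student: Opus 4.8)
The plan is to reduce both assertions to the one-variable case, which is nothing but the error estimate for the trapezoidal rule (equivalently, the lowest-order Euler--Maclaurin formula). For the first claim I would set $h = 1/N$ and Taylor-expand $f$ with integral remainder on each subinterval $[ih,(i+1)h]$; integrating and summing over $i = 0,\dots,N-1$ produces
\[
\int_0^1 f(x)\,dx \;=\; \frac{1}{N}\left(\frac{1}{2}f(0) + \sum_{i=1}^{N-1} f(i/N) + \frac{1}{2}f(1)\right) + E_N,
\]
where $E_N$ is a sum of $N$ remainder terms each of size $O(h^3\sup_{[0,1]}|f''|)$, hence $E_N = O(1/N^2)$. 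Rearranging gives exactly
\[
\sum_{i=1}^{N-1} f(i/N) \;=\; N\int_0^1 f(x)\,dx - \frac{1}{2}\bigl(f(0)+f(1)\bigr) + O(1/N).
\]
(One could instead simply quote the classical Euler--Maclaurin formula with remainder term.)

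For the square I would apply this one-dimensional identity twice. Summing first in the $i$-variable, for each fixed $y\in[0,1]$,
\[
\sum_{i=1}^{N-1} g(i/N,y) \;=\; N\int_0^1 g(x,y)\,dx - \frac{1}{2}\bigl(g(0,y)+g(1,y)\bigr) + O(1/N),
\]
and the essential point is that the $O(1/N)$ here is \emph{uniform in $y$}, since the trapezoidal remainder is bounded by $\sup_S|\partial_x^2 g|$, which is finite because $g\in C^\infty(S)$ and $S$ is compact. I would then sum this over $j$ with $y = j/N$. The map $y\mapsto\int_0^1 g(x,y)\,dx$ is $C^\infty$ on $[0,1]$ (differentiation under the integral sign), and $y\mapsto g(0,y)$, $y\mapsto g(1,y)$ are $C^\infty$, so the one-dimensional result applies to each term: the integral term yields $N^2\int_S g\,dx\,dy - \tfrac{N}{2}\int_0^1(g(x,0)+g(x,1))\,dx + O(1)$, the boundary term yields $-\tfrac{N}{2}\int_0^1(g(0,y)+g(1,y))\,dy + O(1)$, and $\sum_{j=1}^{N-1}O(1/N) = O(1)$. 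Adding, the four edge integrals combine into $\int_{\partial S} g\,ds$, which gives the stated formula.

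I do not expect a genuine obstacle; the argument is routine. The two points that require a little attention are (i) uniformity in the parameter of the one-dimensional error, so that summing $N$ copies of it still contributes only $O(1)$ — this is exactly where compactness of $S$ and boundedness of the relevant partial derivatives are used — and (ii) checking that the partial integral $\int_0^1 g(\cdot,y)\,dx$ is smooth in $y$, which legitimizes the second application of the one-dimensional formula. I would also note that the same proof works with two independent mesh sizes $N$ and $M$, replacing $N^2$ by $NM$ and $g(i/N,j/N)$ by $g(i/N,j/M)$, and that only $C^2$ regularity is actually needed for the orders claimed.
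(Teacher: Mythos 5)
Your argument is correct. The paper itself offers no proof of this proposition --- it is invoked as the classical Euler--Maclaurin (trapezoidal rule) formula --- and your derivation is the standard one: the one-dimensional error estimate via Taylor expansion with remainder, then iterated application in the two variables. You correctly identify the only two points needing care, namely uniformity of the $O(1/N)$ error in the second variable (which follows from compactness of $S$ and boundedness of $\partial_x^2 g$) and smoothness of $y \mapsto \int_0^1 g(x,y)\,dx$; your remark that $N$ and $M$ may be taken independent, with $N^2$ replaced by $NM$, is also the form actually needed in the paper's displayed statement.
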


\subsubsection{Bulk Term}
The bulk sum is:
\begin{align*}
&\sum_{p \in X\backslash L}  -\frac{\delta\lambda}{\lambda} + \frac{1}{\lambda} \left(\delta \omega \frac{2 \lambda}{  \pi \omega } \tan^{-1} \left( \omega \right) - \frac{\delta \omega}{\omega^2}  \frac{2 \lambda \omega}{  \pi } \tan^{-1} \left( \frac{1}{\omega} \right) \right) \\
 =& \sum_{p \in X \backslash L} -\frac{\delta \lambda}{ \lambda} + \frac{2 \; \delta \omega}{\pi \; \omega}\left( \tan(\omega) -  \tan^{-1}( \omega^{-1} ) \right)
\end{align*}
Using the Euler-MacLaurin formula, we find that the bulk contribution to the asymptotic determinant is:
\begin{equation} \label{eq:sum1b}
\begin{aligned}
\frac{1}{\epsilon^2} &\int_M  -\frac{\delta \lambda}{ \lambda} + \frac{2 \; \delta \omega}{\pi \; \omega} \left( \tan(\omega) -  \tan^{-1}( \omega^{-1} ) \right) \; dx \; dy  \\ &- \frac{1}{2 \epsilon} \int_{\partial M} -\frac{\delta \lambda}{ \lambda} + \frac{2 \; \delta \omega}{\pi \; \omega}\left( \tan(\omega) -  \tan^{-1}( \omega^{-1} ) \right) ds
\end{aligned}
\end{equation}

\subsubsection{Boundary Term}
We will compute the contribution of one boundary region $ B_i $. We first sum in the normal direction of the boundary $ \partial M_i $, and then sum along the boundary using the Euler-Maclaurin formula.  We assume is aligned with the $ y$-axis. We choose coordinates $ (x,y) $ chosen so that $ \partial M_i $ is on the $ y $-axis with the interior of $ M$ in the region $ x> 0 $.

Observe that in the boundary summand (\ref{eq:sum2}) by symmetry the term $ D_y \L_{R_i} $ vanishes. In addition, $ \left( \frac{\omega}{\lambda} D_x^2 + \frac{1}{\lambda \omega} D_y^2 \right) \L_{R_i}  $. So we rewrite (\ref{eq:sum2}) as:
\begin{align} \label{eq:be1}
2 \frac{\delta \omega(p_1)}{\lambda(p_1)} D_x^2 \L_{R_i}(p_1,p_1) + \e \frac{\partial_x \delta \omega(p_1)}{ \lambda(p_1)} D_y \L_{R_i} (p_1,p_1)
\end{align}
We expand the terms in the $ x $-coordinate, normal to the boundary:
\begin{align*}
\frac{\delta \omega(x,y)}{\lambda(x,y)} &=  \frac{2 \delta \omega(0,y)}{\lambda(0,y)} + x \; 2\frac{\partial_x \delta \omega(0,y) \lambda(0,y) - \partial_x \lambda(0,y) }{\lambda(0,y)^2}+ \cdots \\ 
\frac{\partial_y \delta \omega(x,y)}{ \lambda(x,y) } &=  \frac{\partial_x \delta \omega(x,0)}{ \lambda(x,0) } + \cdots
\end{align*}
Recall that $ \L_{R_i}(p_1,p_1)  = \L_{g(x,y)}( 2 x / \e,0) $. We can expand the spatial dependence in the metric $ g(x,y) $ as:
\begin{align}
D_x^2 \L_{g(x,y)}(2 x / \e ,0) = D_x^2 \L_{g(x,0)} (2 x/ \e) - \frac{1}{8 \pi } \frac{\e \; \partial_x \lambda(x,0)}{x} + \frac{O(\e^3)}{x^2}
\end{align}
This expression can be justified with using the definition of $\L$ and the equation (\ref{eq:logdef}), but it is easy to guess from the asymptotic expression (\ref{eq:logasymp}), which we rewrite in terms of $ \lambda $ and $ \omega $ as:
\begin{align*}
D_x^2 \L_{g(x,y)}(2 x / \e,0) &=-\frac{\lambda(x,y) \e^2}{ 8 \pi x^2} + \frac{O(\e^4)}{x^4} \\
D_x \L_{g(x,y)}(2 x / \e, 0) &= \frac{\lambda(x,y) \e^2}{4 \pi x} + \frac{O(\e^3)}{x^2}
\end{align*}
Using these expansions in (\ref{eq:be1}), we find that all terms of order $ \frac{O(\e)}{x} $ cancel so that we have for  (\ref{eq:be1}):
\begin{align*}
  \frac{2 \delta \omega(0,y)}{\lambda(0,y)} D_x^2 \L_{g(x,0)} (2 x / \e, 0) + \frac{O(\e^3)}{x^2}
\end{align*}
We now sum normal to the boundary:
\begin{align*}
\sum_{x = \e}^\delta&  \left(  \frac{2 \delta \omega(0,y)}{\lambda(0,y)} D_x^2 \L_{g(x,0)} (2 x / \e, 0) + \frac{O(\e^3)}{x^2}\right) \\
&=  -\frac{\delta \omega(0,y)}{2 \sqrt{ 1+ \omega(0,y)^2}} + \frac{2 \delta \omega(0,y)}{ \pi \omega(y,0) }  \tan^{-1}(\omega)  + O(\e) 
\end{align*}
Finally, we sum along the boundary using the Euler-Maclaurin formula:
\begin{align*}
\frac{1}{\e} \int_{\partial M_i}  \left(-\frac{\delta \omega }{2 \sqrt{ 1+ \omega^2}} + \frac{2}{\pi}\frac{\delta \omega}{\omega}+  \frac{\delta \omega}{ \pi \omega} (\tan^{-1}(\omega)  -\tan^{-1}(\omega^{-1}) \right) dy + O(1) 
\end{align*}
For the other boundary segments oriented along the x-axis, we have the same formula but with $ \omega \rightarrow \frac{1}{\omega} $. We summarise this by writing:
\begin{align}
\delta B_x &= -\frac{\delta \omega}{2 \sqrt{ 1+ \omega^2}} + \frac{\delta \omega}{\omega}  \\
\delta B_y &= \frac{\delta \omega}{2 \omega \sqrt{ 1+ \omega^2}} -\frac{\delta \omega}{\omega}
\end{align}
And writing the boundary contribution as:
\begin{align} \label{eq:sum2b}
\frac{1}{\e} \int_{\partial M} \delta B_{\perp} \; + \frac{\delta \omega}{ \pi \omega} (\tan^{-1}(\omega)  -\tan^{-1}(\omega^{-1})  ds + O(1)
\end{align}

\subsubsection{Corner Term} Using methods very similar to the previous subsection, it is straightforward to check that the corner terms (\ref{eq:sum3}) do not contribute to the asymptotic expansion. We omit the details.

\subsection{Determinant}
We now compile together (\ref{eq:sum1b}) and (\ref{eq:sum2b}) into equation (\ref{eq:trsum}). The boundary term in (\ref{eq:sum1}) cancels with part of (\ref{eq:sum2}). So we have in total:
\begin{align} \label{eq:vareq}
\delta \log \det \Delta = \frac{1}{\e^2} \int_M \delta F dx \; dy \; +    \frac{1}{\e} \int_{\partial M} \delta B_{\perp} ds  + O(1) 
\end{align}
where:
\begin{align*}
\delta F &=  -\frac{\delta \lambda}{ \lambda} + \frac{2 \; \delta \omega}{\pi \; \omega} \left( \tan(\omega) -  \tan^{-1}( \omega^{-1} ) \right) \\
\delta B_x &= -\frac{\delta \omega}{2 \sqrt{ 1+ \omega^2}} + \frac{\delta \omega}{\omega}  - \frac{\delta \lambda}{\lambda} \\
\delta B_y &= \frac{\delta \omega}{2 \omega \sqrt{ 1+ \omega^2}} - \frac{\delta \omega}{\omega} - \frac{\delta \lambda}{\lambda}
\end{align*}
To compute the determinant, we integrate (\ref{eq:vareq}). The constant of integration is fixed when $ \lambda = 1 $ and $ \omega = 1 $ by the expansion (\ref{eq:eq2}). We find:
\begin{align*}
\log \det \Delta =  \frac{1}{\e^2} \int_M  F dx \; dy \; +    \frac{1}{\e} \int_{\partial M} B_{\perp} ds  + O(1) 
\end{align*}
where
\begin{align*}
F &= \log(\lambda) + \frac{2}{\pi}  \Im \left( \text{\emph{Li}}_2 \left(i \omega \right) + \text{\emph{Li}}_2 \left(i \omega^{-1} \right) \right) \notag \\
B_x &= -\log(1 + \sqrt{1+ \omega^{-2}} ) - \log(\lambda) \\
B_y &= -\log(1 + \sqrt{1+ \omega^2} ) - \log(\lambda) \\ \notag
\end{align*}
Substituting $ \lambda = \sqrt{ g_{xx} g_{yy}} $ and $ \omega = \sqrt{ g_{yy} / g_{xx}} $ yields (\ref{eq:densities}). 

\section{Conclusions} \label{sec:sec6}
There are many areas where we believe this work can be improved or generalized. We conclude by outlining some of these areas.

In this paper, we have focused on the square lattice on which one can approximate diagonal metrics. To study more general metrics, one could for example add the diagonal edges of the square lattice. Our methods do not generalize easily to this case - the difficulty is in calculating boundary contributions. In fact,the case where the metric is constant but nondiagonal has not been computed in the literature. One could also consider massive Laplacian operators of the form $ \Delta + m^2 $ or even more generally $ \Delta+ V(x) $ for some potential function $ V \in C^\infty(M) $. In this case, we expect the asymptotic expansion to the be same up to logarithmic terms, although we have not tried to prove this in this paper. For results in this direction, see \cite{chaum}.

Another avenue for generalization is to study the combinatorial Laplacian on non-rectangular regions, for instance, rectilinear regions in $ \mathbb{Z}^2 $. We conjecture that the asymptotic expansion for this case is the obvious generalization of (\ref{eq:eq3}). An approach to this would be to derive gluing formulas for the asymptotic expansions. Similar gluing formulas were derived in \cite{Kenyon2000} by exploiting the relation to the spanning tree model. Alternatively, gluing formulas for determinants can be written in terms of the discrete Dirichlet-to-Neumann operator, which could be studied in the asymptotic limit using estimates for the discrete Green's function given in this paper.

In this paper, we have focused on the laplacian and have not studied the discrete dirac operator, and dimer model on graphs with Riemannian structure. We plan to study this in a future publication. Existing results already give the leading order contribution in the asymptotic partition function and also the average height function \cite{Kenyon2009}, but we believe it is possible to strengthen these results for the dimer models defined in this paper.

Finally, as we have previously mentioned, we have not attempted to investigate the constant term in the asymptotic expansion of the determinant. It is clear that the constant term depends on details of the combinatorial Laplacian not fixed by requiring its consistency. However, for the pure graph laplacian on the square lattice, corresponding to a Euclidean metric, it is known that the constant term in the expansion is equal to the $ \zeta$-regularized determinant of the analytic laplacian. We question if there is a natural condition on combinatorial Laplacians for which the constant term in the asymptotic expansion of the determinant is equal to the $ \zeta $-regularized determinant. Identifying such a condition would be a step towards defining a free Gaussian field theory as the regularized limit of discrete Gaussian field theories.

\end{document}